  \let\oldparagraph\paragraph
  \renewcommand{\paragraph}{
    \@ifstar
      \xxxParagraphStar
      \xxxParagraphNoStar
  }
  \newcommand{\xxxParagraphStar}[1]{\oldparagraph*{#1}\mbox{}}
  \newcommand{\xxxParagraphNoStar}[1]{\oldparagraph{#1}\mbox{}}
  \let\oldsubparagraph\subparagraph
  \renewcommand{\subparagraph}{
    \@ifstar
      \xxxSubParagraphStar
      \xxxSubParagraphNoStar
  }
  \newcommand{\xxxSubParagraphStar}[1]{\oldsubparagraph*{#1}\mbox{}}
  \newcommand{\xxxSubParagraphNoStar}[1]{\oldsubparagraph{#1}\mbox{}}
\patchcmd\longtable{\par}{\if@noskipsec\mbox{}\fi\par}{}{}
\def\maxwidth{\ifdim\Gin@nat@width>\linewidth\linewidth\else\Gin@nat@width\fi}
\def\maxheight{\ifdim\Gin@nat@height>\textheight\textheight\else\Gin@nat@height\fi}
\def\fps@figure{htbp}
  \renewcommand*\contentsname{Table of contents}
  \newcommand\contentsname{Table of contents}
  \renewcommand*\listfigurename{List of Figures}
  \newcommand\listfigurename{List of Figures}
  \renewcommand*\listtablename{List of Tables}
  \newcommand\listtablename{List of Tables}
  \renewcommand*\figurename{Figure}
  \newcommand\figurename{Figure}
  \renewcommand*\tablename{Table}
  \newcommand\tablename{Table}
\newtheorem{theorem}{Theorem}
\newtheorem{lemma}{Lemma}
\newtheorem{proposition}{Proposition}
\newtheorem{corollary}{Corollary}
\newtheorem{remark}{Remark}
\DeclareMathOperator*{\argmin}{arg\,min}
\newcommand{\bbP}{\mathbb{P}}
\newcommand{\possiblemerges}[1]{\mathcal{M}^{(#1)}}
\newcommand{\merge}[1]{M^{(#1)}}
\newcommand{\mergeobs}[1]{M_o^{(#1)}}
\newcommand{\mergecol}[1]{\overline{M}^{(#1)}}
\newcommand{\mergecolobs}[1]{\overline{M}_o^{(#1)}}
\newcommand{\tradmerge}[1]{M^{*(#1)}}
\newcommand{\tradmergecol}[1]{\overline{M}^{*(#1)}}
\newcommand{\existingclusters}{\mathcal{C}}
\newcommand{\existingcluT}[1]{\mathcal{C}^{(#1)}}
\newcommand{\winclu}[2]{{C}^{(#1)}_{#2}} 
\newcommand{\wincluobs}[2]{C^{(#1)}_{#2,o}}
\newcommand{\selprob}[1]{p^{(#1)}(M;\mathcal{M}_o^{(#1)},X_o)}
\newcommand{\selprobtXu}[3]{p^{(#1)}\left(M_o^{(#2)};\mathcal M_o^{(#2)},X(#3; \mathcal{A}_o^{(t)})\right)}
\newcommand{\selprobtXuchi}[3]{p^{(#1)}\left(M_o^{(#2)};\mathcal{M}_o^{(#1)},X(#3; \mathcal{A}_{\Sigma,o}^{(t)})\right)}
\newcommand{\pvalue}{\normalfont{\textbf{P}}^{(t)}}
\newcommand{\indep}{\perp \!\!\! \perp}
\newcommand{\anon}{1}
\begin{document}

\def\spacingset#1{\renewcommand{\baselinestretch}%
{#1}\small\normalsize} \spacingset{1}


\if1\anon
{
  \title{\textbf{Hierarchical Clustering With Confidence}}
  \author{Di Wu\\
Department of Statistics, University of Michigan\\
and \\
Jacob Bien\\
Department of Data Sciences and Operations,
University of Southern California\\
and\\
Snigdha Panigrahi\thanks{The author gratefully acknowledges support from NSF CAREER Award DMS-2337882.}\hspace{.2cm}\\
Department of Statistics, University of Michigan}
  \maketitle
} \fi

\if0\anon
{
  \bigskip
  \bigskip
  \bigskip
  \begin{center}
    {\LARGE\bf Hierarchical Clustering With Confidence}
\end{center}
  \medskip
} \fi

\bigskip
\begin{abstract}
Agglomerative hierarchical clustering is one of the most widely used approaches for exploring how observations in a dataset relate to each other.  
However, its greedy nature makes it highly sensitive to small perturbations in the data, often producing different clustering results and making it difficult to separate genuine structure from spurious patterns.
In this paper, we show how randomizing hierarchical clustering can be useful not just for measuring stability but also for designing valid hypothesis testing procedures based on the clustering results. 

We propose a simple randomization scheme together with a method for constructing a valid p-value at each node of the hierarchical clustering dendrogram that quantifies evidence against performing the merge.
Our test controls the Type I error rate, works with any hierarchical linkage without case-specific derivations, and simulations show it is substantially more powerful than existing selective inference approaches.
To demonstrate the practical utility of our p-values, we develop an adaptive $\alpha$-spending procedure that estimates the number of clusters, with a probabilistic guarantee on overestimation.  
Experiments on simulated and real data show that this estimate yields powerful clustering and can be used, for example, to assess clustering stability across multiple runs of the randomized algorithm.
\end{abstract}

\noindent%
{\it Keywords:} Exponential mechanism, Hypothesis testing, Randomization, Selective inference, Stability, Type I error
\vfill

\newpage
\spacingset{1.8} 

\section{Introduction}

Agglomerative hierarchical clustering is one of the most widely used approaches for exploring relationships among observations in a dataset.
By producing nested clusters, it allows researchers to visualize how observations group together across different levels of similarity, without specifying the number of clusters in advance. 
Owing to its ease of implementation and the convenient visualization of its results, this class of clustering methods has found extensive applications in diverse fields such as genomics (\citep{Eisen1998genomic}), neuroscience (\citep{MorenoDominguez2014_HierarchicalParcellation}), ecology (\citep{Zolfaghari2019_DesertificationClusters}), network analysis (\citet{SHEN20091706,chen2009}), and text mining (\citep{Steinbach2000}).

Agglomerative hierarchical clustering works by iteratively merging similar observations to form nested clusters, producing a tree-like structure called a dendrogram.
Each leaf of the dendrogram represents an individual observation, treated as a singleton cluster, while each interior node represents a subset or cluster of observations.
At each iteration, the clustering algorithm identifies the pair of clusters with the smallest dissimilarity, defined according to a chosen linkage criterion, and merges them.
This iterative, bottom-up process continues until all observations are merged into a single cluster.
See, for example, panel (a) of Figure \ref{fig:dendrogram}, which illustrates a two-dimensional dataset of $30$ observations belonging to two true clusters, while the panel (b) shows the dendrogram obtained using complete-linkage hierarchical clustering.
In this work, we focus on agglomerative hierarchical clustering, hereafter referred to simply as hierarchical clustering.

\begin{figure}[t]
        \centering
        \includegraphics[width = \linewidth]{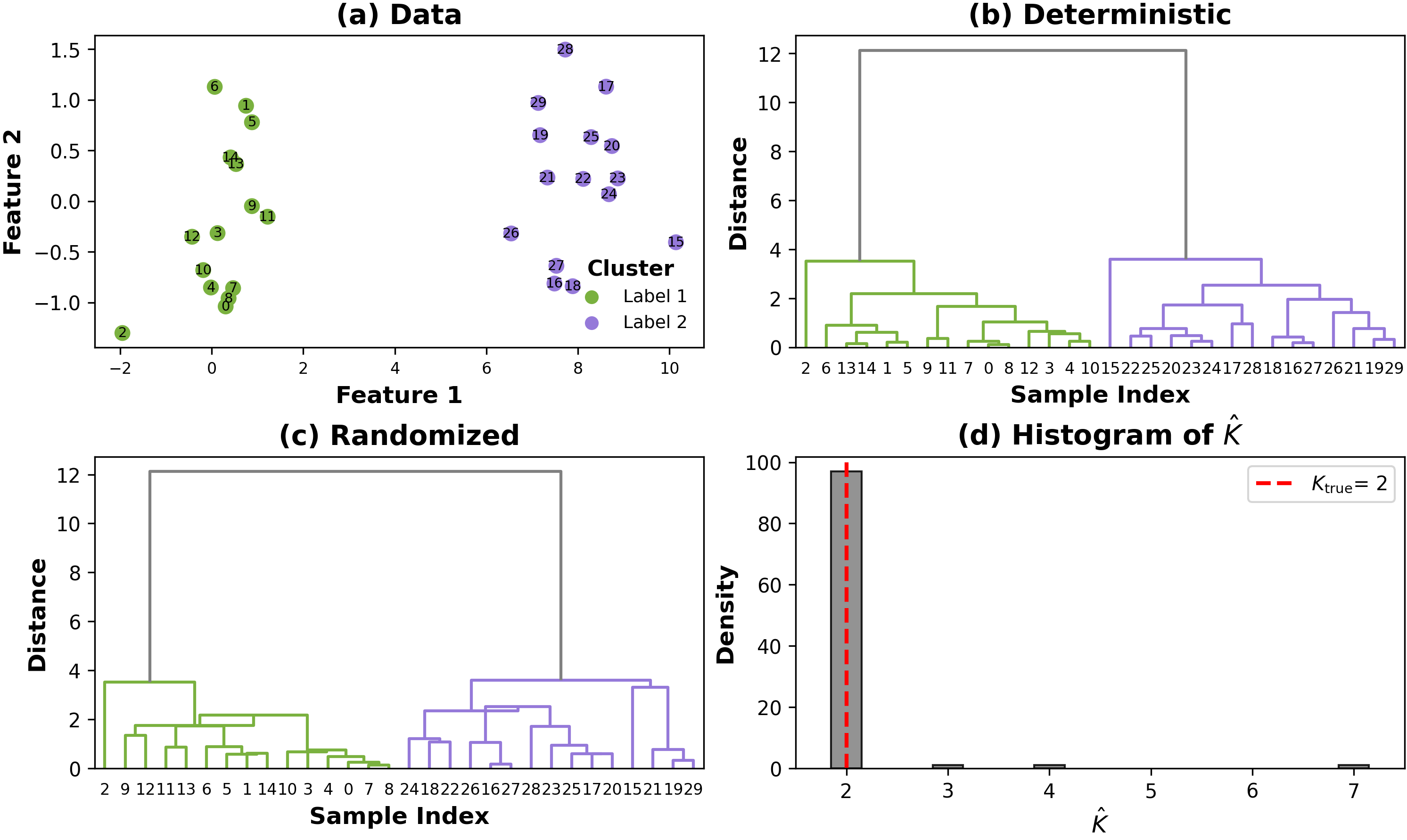}
    \caption{ (a): A two-dimensional example with $n=30$ points across two true clusters; (b): a dendrogram resulting from complete-linkage hierarchical clustering of the example data; (c): a dendrogram resulting from complete-linkage randomized hierarchical clustering of the example data (using a randomization parameter $\tau^* = 0.10$, defined later in Section \ref{sec:algorithm}); (d): A histogram for the estimated number of clusters $\hat{K}$ based on the proposed $\alpha$-spending procedure.}
    \label{fig:dendrogram}
\end{figure}

Although a complete nested clustering structure can be obtained from the tree-like dendrogram, hierarchical clustering is, at its core, a greedy algorithm.
Without principled approaches to quantify evidence against greedy merges or to identify the appropriate cut level, it becomes difficult to distinguish genuine clusters from patterns arising purely out of sampling variation.
For instance, hierarchical clustering results can be highly sensitive to small perturbations in the data due to its greedy nature.
To evaluate clustering results in the absence of labels or costly validation data, several papers (\citet{vonLuxburg2010, liu2022stability, Lange2004}) have advocated the concept of clustering stability.
The underlying idea is that genuine clustering structure should remain reproducible under small perturbations. 
To this end, \cite{Monti2003} and \cite{Hennig2007} have suggested introducing randomization into the algorithm to reveal the stability of pairwise cluster assignments.

In this paper, we show how randomizing a hierarchical clustering algorithm can be useful not just for measuring stability but also for constructing valid hypothesis testing procedures based on the clustering results.
We propose a simple randomization scheme together with a method for forming a valid p-value at each node of the dendrogram, which quantifies the statistical evidence against performing the merge.
Panel (c) of Figure \ref{fig:dendrogram} displays a dendrogram produced by one run of our randomized complete-linkage hierarchical clustering algorithm.
It matches the dendrogram in panel (b) in terms of the two real clusters in the data, while scrambling the order in which points within each cluster are merged.

Our randomization approach is motivated by recent inferential advances for decision trees \citep{Bakshi2024, bakshi2025classification}, in which conventional splits are replaced with randomized, softmax-based splits to account for the adaptive tree structure during inference.
The scope of our approach, however, is not limited to performing post-hoc inference on clusters obtained from a fixed level of the algorithm.
While hierarchical clustering does not require a prespecified number of clusters, interpreting its output in practice requires choosing an appropriate dendrogram level or cut height to extract and assess the resulting cluster structure.
To address this practical issue, we develop an adaptive $\alpha$-spending procedure based on our valid p-values to estimate the number of clusters, providing a probabilistic guarantee against overestimation.
Continuing with the example in Figure \ref{fig:dendrogram}, we apply our proposed method to estimate the number of clusters, $\widehat{K}$, over $100$ runs of our randomized algorithm; panel (d) displays the empirical histogram of $\widehat{K}$, which has a strong mode at $2$, the true number of clusters.

We next provide a detailed account of our contributions, placing them in the context of the existing literature on inference for clustering.

\subsection{Related work and contributions}

Determining whether two clusters identified by a clustering algorithm are genuinely distinct, when the same data are used both to discover and to validate these clusters, is an instance of \textit{double dipping}.
Selective inferential methods address this by explicitly accounting for the dependence introduced by reusing the same data twice.
Recent work by \cite{gao2024selective} and \cite{yun2023selective} develop valid post-hoc inference on results obtained from agglomerative clustering.
Beyond hierarchical clustering, \cite{chen2023selective} develop a selective inferential approach to validate clusters identified by the K-means algorithm.
By conditioning on the output of the clustering algorithm, these approaches yield valid p-values at a prespecified dendrogram level, controlling the Type I error rate.
\cite{Zhang2019} address a similar problem using sample splitting, which conditions on more information than necessary when compared with the aforementioned approaches.
Other work, such as \cite{Liu2008SigClust, Maitra2012Bootstrap, Suzuki2006Pvclust}, test the global null hypothesis that no clustering structure exists in the data, which is fundamentally different in spirit from the hypotheses considered in the selective inference approaches.

To test the inherently data-dependent hypotheses at each merge in the dendrogram, we condition on the clustering outcome, following the conditional framework in existing selective inference methodologies for clustering.
However, by introducing external randomization, our testing method differs considerably from previous approaches and offers several advantages, as outlined below.
\begin{enumerate}[leftmargin=*]
\item Our approach produces substantially more powerful p-values than the non-randomized methods when testing mean differences between a pair of clusters at a fixed dendrogram height.
This gain in power comes at little cost to the clustering quality when only a small amount of randomization is introduced, as seen in Figure~\ref{fig:dendrogram}, for example.
Gains in inferential power due to external randomization have been well-documented in supervised learning contexts.
Take, for example, data-carving methods in \cite{fithian2014optimal, panigrahi2023carving}, which select models using a random subsample like sample-splitting, but use the full data for inference.
Approaches using other types of randomization schemes are described in
\cite{tian2018selective, panigrahi2023approximate, Panigrahi2024Randomization, kivaranovic2024tight, huang2025selective}, among others. 

\item Furthermore, unlike randomized methods based on data-splitting, like the UV decomposition, fission, or thinning \citep{rasines2021splitting,leiner2025data, dharamshi2025generalized}, that base inference exclusively on holdout data, our approach reuses the data employed in clustering.
In particular, with a small amount of randomization, our approach produces only a slight perturbation of the clustering results on the original data while still using the full dataset for testing.

\item The p-values obtained from our randomized approach can be used in conjunction with any hierarchical clustering linkage to produce valid inference, without the need for: (i) case-by-case derivations of the conditioning set or (ii) potentially expensive Monte Carlo or resampling-based approximations to compute the p-values.
For example, in complete-linkage hierarchical clustering (which was used in Figure \ref{fig:dendrogram}), the tests in \cite{gao2024selective, yun2023selective} rely on Monte Carlo approximations to compute p-values, whereas our method does not resort to such sampling-based approximations.
\item Adding to the flexibility of our approach, the proposed randomization scheme can be readily applied to obtain adjusted distributions of test statistics for different types of tests, while accounting for the data-dependent nature of clustering outputs. Examples include tests based on the adjusted chi-squared test statistic in \cite{gao2024selective} and the adjusted F test statistic in \cite{yun2023selective}.
\end{enumerate}

We summarize our main contributions below and present an outline of the paper.
\begin{itemize}[leftmargin=*]
\item In Section \ref{sec:algorithm}, we introduce the randomized hierarchical clustering algorithm (Algorithm \ref{alg:RAC}), which performs clustering using a dissimilarity matrix that is updated at each merge.
\item In Section \ref{sec:pvals}, we construct p-values conditional on all previous merges to determine whether there is statistical evidence that a given merge should have been made.
Theorem \ref{thm:formula_density} derives the corresponding conditional distribution, and Theorem \ref{thm:cond_p_value} shows that p-values computed from this distribution control an appropriately defined Type I error.
\item In Section \ref{sec:estimateK}, we develop an adaptive $\alpha$-spending procedure for estimating the appropriate dendrogram cut height, as summarized in Algorithm \ref{alg:alpha_spending_stop}.
The Type I error control provided by our p-values at each merge allows us to allocate $\alpha$-values prudently based on the sizes of the clusters being merged. 
Theorem \ref{thm:overest:control} establishes a probabilistic guarantee, ensured by our algorithm, against overestimating the total number of clusters.
\item  Simulation results in Section~\ref{sec:simulations} demonstrate that: (i) our randomized approach yields valid p-values at each merge in the dendrogram, (ii) tests based on these p-values achieve higher power than existing selective inference methods at a fixed dendrogram height, without sacrificing clustering performance when only a small amount of randomization is used, and (iii) the proposed $\alpha$-spending procedure to estimate the dendrogram cut height yields a powerful clustering method that outperforms the standard clustering paired with the popular gap statistic \citep{tibshirani2001estimating}. 
In a real data analysis, we illustrate how the number of clusters estimated using our method can be used to assess the stability of clustering results across multiple runs of randomized Algorithm \ref{alg:RAC}.
\item Section \ref{sec:conclusion} concludes the paper with a brief summary and ideas for future research.
\end{itemize}

\section{Randomized hierarchical clustering}\label{sec:algorithm}

In traditional agglomerative hierarchical clustering, we begin with each observation in its own cluster and at each step we find the pair of clusters that are closest and merge them.  To evaluate each candidate merge $M=(C_1,C_2)$, we use a dissimilarity between the clusters $C_1$ and $C_2$ which in this context is called a {\em linkage function}, $d(M;X)$, where $X\in\mathbb R^{n\times p}$ is the data matrix.  In particular, at step $t$ one chooses a merge
\begin{equation}
\tradmerge{t}\in\arg\min_{M\in \mathcal M^{*(t)}}d(M;X),\label{eq:trad_hclust}    
\end{equation}
where $\mathcal M^{*(t)}$ is the allowable set of merges at step $t$, consisting of all pairs of clusters existing at that step.  In practice, there are two choices built into the function $d(M; X)$: (i) how to form a $n\times n$ dissimilarity matrix $D^{(1)}(X)$ between the rows of $X$ (e.g., Euclidean distance) and (ii) how to go from those base dissimilarities to the dissimilarity between a pair of clusters (e.g., taking the minimum/maximum/average dissimilarity between pairs of points in different clusters).  Depending on these choices, ties may arise in \eqref{eq:trad_hclust}, in which case there can be multiple merge sequences $\tradmergecol{t}:=(\tradmerge{1},\ldots,\tradmerge{t})$ that are equally valid.

\subsection{Proposed randomization scheme}

Our proposed algorithm relaxes the greedy nature of traditional hierarchical clustering by randomizing the merge selection.  Instead of selecting a {\em minimizing} merge $\tradmerge{t}$ at each step, we sample a {\em random} merge $M^{(t)}$ from $\mathcal M^{(t)}$, the current set of allowable merges. Notationally, we remove stars to distinguish our random sequence from the greedy algorithm's sequence.  
Furthermore, let $\mergecol{t-1}:=(M^{(1)}, \ldots, M^{(t-1)})$ denote the random sequence of previous merges until step $t-1$. Then, note that $\mathcal M^{(t)}$, the set of allowable merges at step $t$, is a deterministic function of the past merges, i.e., $\mathcal M^{(t)}:=\mathcal M(\mergecol{t-1})$.

We are now ready to describe how we sample the random merge at step $t$. 
Conditional on $\mergecol{t-1}$ and $X$ (i.e, the sequence of previous merges and the data), $M^{(t)}$ is sampled according to the probability mass function 
\begin{equation}
\begin{aligned}
\small\bbP\left(M^{(t)}=M|\mergecol{t-1}=\mergecolobs{t-1},X=X_o \right)= 
\selprob{t}:=\dfrac{\exp\left(- d(M; X_o)/\tau(\mathcal M_o^{(t)},X_o)\right)}{\displaystyle\sum_{M'\in \mathcal M_o^{(t)}}\exp\left(-d(M';X_o)/\tau(\mathcal M_o^{(t)},X_o)\right)},
\end{aligned}
\label{eq:sel_prob}
\end{equation}
where $X_o$ is the observed data matrix, $\tau$ is a user-specified function controlling the amount of external randomization in the merge selection, and $\mathcal M^{(t)}_o=\mathcal M(\mergecolobs{t-1})$ is the observed set of currently allowable merges.

   
In brief, we have replaced the $\arg\min$ over the vector $D^{(t)}(X):=\{d(M;X):M\in\mathcal M^{(t)}\}$ by sampling from the softmax of $-D^{(t)}(X)/\tau(\mathcal M^{(t)},X)$.
The randomized merging scheme introduced above closely resembles the exponential mechanism by \cite{mcsherry2007mechanism}, which is used in differential privacy to randomly sample an item from a discrete set while providing privacy guarantees.
Recent work by \cite{zhang2024winners} uses a similar type of sampling probabilities as weights in a weighted average-type procedure to obtain stability, which in turn is used to establish the asymptotic normality of cross-validated risks. 
A key distinction of our randomized approach from the broader literature on differential privacy and algorithmic stability is that we neither require nor seek to achieve any form of privacy or stability in order to construct valid testing procedures, as described later in Section \ref{sec:pvals}.

\subsection{Amount of randomization}
\label{subsec:amount:rand}

At each step of the randomized clustering algorithm, one must determine the amount of randomization used to sample the merge.
We define the user-specified function controlling the amount of external randomization as
\begin{equation}
\tau(\possiblemerges{t},X):= \tau^*\cdot \frac{1}{|\possiblemerges{t}|}\sum_{M'\in\possiblemerges{t}}d(M';X),
\label{eqn:scaletau}
\end{equation}
i.e., at step $t$, we set the amount of randomization to be a constant fraction $\tau^*$ of the average dissimilarity among the potential merges at that step.
This data-driven specification of $\tau$ ensures that the clustering structure produced by our randomized algorithm remains invariant under rescaling of the data, a property also exhibited by the deterministic version.


The smaller the value of $\tau$, the lower the amount of randomization in our clustering algorithm. 
Our next result, Proposition \ref{prop:limit}, shows that as $\tau^* \to 0$, the random sequence of merges produced by our randomized clustering algorithm, with probability approaching $1$, coincides with the sequence generated by the traditional algorithm under the same linkage criterion. 
Corroborating this result, our numerical experiments in Section \ref{subsec:clustering_quality} show that small values of $\tau^*$
lead to clustering results that closely resemble the clustering output on the full data in terms of standard clustering evaluation metrics. 

\begin{proposition}
Let $\Omega_o^{*(t)}$ be the set of possible merge sequences from running traditional hierarchical clustering \eqref{eq:trad_hclust} for $t$ steps on a fixed data matrix $X_o$, i.e.
\begin{equation}
\Omega_o^{*(t)}=\left\{(\tradmergecol{t-1}_o,\tradmerge{t}):\tradmergecol{t-1}_o\in \Omega_o^{*(t-1)},~\tradmerge{t}\in\argmin_{M\in \mathcal M(\tradmergecol{t-1}_o)}d(M;X_o)\right\}.
\label{eq:set_mergeseq}
\end{equation}
     Then, the randomized algorithm's random merge sequence $\mergecol{t}$ when presented with the same data satisfies the following:
     $$
     \lim_{\tau^* \rightarrow 0} \bbP(\mergecol{t}\in \Omega^{*(t)}_o|X=X_o) =1.
     $$
\label{prop:limit}
\end{proposition}
\vspace{-1em}
A detailed proof for this result is deferred to Appendix \ref{proof:prop1}.

Algorithm~\ref{alg:RAC} presents the proposed randomized hierarchical clustering method,  producing $K$ clusters by performing $n - K$ merges. In the special case where $K = 1$, the algorithm proceeds until all observations are merged into a single cluster.
\begin{algorithm}[htbp]
\setstretch{1.5}
\caption{Randomized Agglomerative Clustering Algorithm}
\label{alg:RAC}
\begin{algorithmic}[1]
\vspace{0.2cm}
\State \textbf{Input:} Pairwise dissimilarity matrix $D^{(1)}(X)\in\mathbb R^{n\times n}$, number of clusters $K$, linkage function $d(\cdot; X)$, parameter controlling the amount of randomization $\tau^*$
\State \textbf{Initialization:} Each data point as its own singleton cluster: 
$\existingclusters^{(1)}= \left\{\{1\}, \{2\}, \dots, \{n\}\right\}$

\For{$t\in [n-K]$}

\State Determine the set of allowable merges $\possiblemerges{t}$ from the clustering $\existingclusters^{(t)}$. For each pair of cluster $M\in \possiblemerges{t}$, compute $d(M;X)$ using linkage $d(\cdot; X)$.

\State Set $\tau(\possiblemerges{t}, X):= \tau^*\cdot \dfrac{1}{|\possiblemerges{t}|}\sum_{M\in \possiblemerges{t}}d(M;X)$



\State Sample the winning merge $\merge{t}=\left(\winclu{t}{1},\winclu{t}{2}\right)$ from probability mass function \eqref{eq:sel_prob}.
\State Merge the winning pair of clusters and update the collection of clusters: 
$$ \existingcluT{t+1} \gets \existingcluT{t}\setminus \{
    \winclu{t}{1},\winclu{t}{2}\} \cup \left\{\winclu{t}{1}\cup \winclu{t}{2}\right\}$$
\EndFor
\State \textbf{Output:} Cluster labels for each data point
\end{algorithmic}
\end{algorithm}

\section{Conditional inference: should we merge?}\label{sec:pvals}

\subsection{Null hypothesis and conditional Type I error}

In this section, we develop a hypothesis testing procedure to obtain statistical evidence on whether a merge should be performed. 
Here, we formulate the null hypothesis and discuss the \textit{conditional} Type I error rate, which guarantees that the p-values we construct for each merge remain valid conditional on the clustering output up to that merge.

We consider the following model for $n$ observations with $p$ features:
\begin{equation}\label{eq:model}
    X \sim \mathcal{MN}_{n\times p}(\mu,I_n,\sigma^2 I_p),
\end{equation}
where the rows of $\mu \in \mathbb{R}^{n\times p}$ represent the unknown mean parameters, and $\sigma^2 > 0$ denotes the noise variance, i.e., each observation $X_i \in \mathbb{R}^p$, for $i\in \{1,2,\ldots,n\}$, is an independent realization from $\mathcal{N}_p(\mu_i, \sigma^2 I_p)$. 

A particular realization of Algorithm~\ref{alg:RAC} on $X_o$ (an observed realization of the random matrix $X$), yields a fixed sequence of merges up to time $t$, which we will denote $\mergecolobs{t} = \left(\mergeobs{1}, \mergeobs{2}, \ldots, \mergeobs{t}\right).$ At step $t$, the chosen merge $\mergeobs{t} = \left(\wincluobs{t}{1}, \wincluobs{t}{2}\right)$ gives rise to the null hypothesis 
\begin{equation}\label{eq:hypothesis}
    H_0^{(t)}: \mu_{i}=\mu_{i'}  \quad \text{for all}\quad  i,i' \in \wincluobs{t}{1} \cup \wincluobs{t}{2},
\end{equation}
which states that the data points belonging to the merged pair of clusters $\wincluobs{t}{1}$ and $\wincluobs{t}{2}$ share the same mean parameter.

In what follows, we derive an exact, closed-form, and valid p-value for testing this null hypothesis, treating the variance parameter $\sigma^2$ in \eqref{eq:model} as unknown.
Before presenting our testing procedure, we first present the conditional Type I error guarantee associated with the p-values we construct for each merge.

At each merge, the p-value we construct, denoted by $\pvalue$, satisfies the following conditional Type I error guarantee under the null hypothesis in \eqref{eq:hypothesis}:
\begin{equation}
\bbP_{H_0^{(t)}}\left(\pvalue \leq \alpha \; | \;\mergecol{t} =\mergecolobs{t} \right) = \alpha, \quad \text{ for any } 0\leq\alpha\leq 1.
\label{cond:guarantee}
\end{equation}
In words, $\pvalue$ follows a uniform distribution when we restrict attention to those runs of our randomized clustering algorithm on $X$ that produce the observed sequence of merges $\mergecolobs{t}$.
As we show later, in constructing our $\alpha$-spending testing procedure, we rely on the conditional Type I error rate in \eqref{cond:guarantee}, allowing for an adaptive choice of test level $\alpha=\alpha^{(t)}(\mergecol{t-1})$ at step $t$ that depends on all previous merges up to that step.

\subsection{Test statistic and preliminaries}

\paragraph{Notations.} 
We begin by introducing the necessary notation. 
Consider the pair of merged clusters $\left(\wincluobs{t}{1}, \wincluobs{t}{2}\right)$ at step $t$, whose respective means are denoted by $\bar{X}_{\wincluobs{t}{1}}$ and $\bar{X}_{\wincluobs{t}{2}}$. 
The total number of observations in these two clusters is denoted by $N_o^{(t)} = |\wincluobs{t}{1}| + |\wincluobs{t}{2}|$.  

Define $\nu_{\mergeobs{t}}$ as an $n$-dimensional vector whose $i$-th entry is given by
\begin{equation}\label{eq:nuM}
   \left[\nu_{\mergeobs{t}}\right]_i = \mathbbm{1}\{i\in\wincluobs{t}{1}\}/|\wincluobs{t}{1}| - \mathbbm{1}\{i\in\wincluobs{t}{2}\}/|\wincluobs{t}{2}|.  
\end{equation}

We also define the matrices
\begin{align*}
    B_o^{(t)} &= \frac{\nu_{\mergeobs{t}}\nu_{\mergeobs{t}}^T}{\|\nu_{\mergeobs{t}}\|_2^2},\ \quad
    W_o^{(t)} =  \left( I_{\wincluobs{t}{1}} - \frac{1_{\wincluobs{t}{1}}1_{\wincluobs{t}{1}}^\top}{|\wincluobs{t}{1}|} \right) + \left( I_{\wincluobs{t}{2}} - \frac{1_{\wincluobs{t}{2}} 1_{\wincluobs{t}{2}}^\top}{|\wincluobs{t}{2}|} \right),
\end{align*}
where $I_C$ denotes the $n\times n$ diagonal matrix with entry $(i,i)$ set to $\mathbbm{1}\{i \in C\}$, and $1_{C}$ denotes the $n$-dimensional vector with entry $i$ set to $\mathbbm{1}\{i \in C\}$.

Given these definitions, it follows that: 
(i) $\nu_{\mergeobs{t}}^T X = \bar{X}_{\wincluobs{t}{1}} - \bar{X}_{\wincluobs{t}{2}}$ captures the differences between the means of clusters merged at step $t$; (ii) $B^{(t)}_oX$ is the projection of $X$ onto a rank-one subspace capturing the differences between the cluster means of the merged pair; (iii) $W^{(t)}_oX$ is the projection of $X$ onto a subspace with rank $N_o^{(t)}-2$ capturing the differences among observations within the two merged clusters relative to their cluster means; and (iv) the subspace that $W_o^{(t)}$ projects onto is orthogonal to that of $B_o^{(t)}$.

\paragraph{Test statistic.} 
We consider the test statistic 
\begin{equation}
\label{test:statistic}
R^{(t)}\left(X; \wincluobs{t}{1}, \wincluobs{t}{2}\right):= (N_o^{(t)}-2) \cdot \dfrac{\text{BCSS}^{(t)}}{\text{WCSS}^{(t)}},
\end{equation}
 defined as the ratio between the between-cluster sum of squares $\text{BCSS}^{(t)}$ and the within-cluster sum of squares $\text{WCSS}^{(t)}$ for the merged clusters, where 
\begin{align*}
\begin{gathered}
\text{BCSS}^{(t)}:=  \dfrac{|\wincluobs{t}{1}| |\wincluobs{t}{2}|}{N_o^{(t)}} \times \Big\|\bar{X}_{\wincluobs{t}{1}}- \bar{X}_{\wincluobs{t}{2}}\Big\|_2^2= \|B^{(t)}_oX\|^2_F, \\ 
\text{WCSS}^{(t)}:= \sum_{i\in \wincluobs{t}{1}} \Big\|X_i - \bar{X}_{\wincluobs{t}{1}}\Big\|_2^2 + \sum_{i\in \wincluobs{t}{2}} \Big\|X_i - \bar{X}_{\wincluobs{t}{2}}\Big\|_2^2 =  \|W^{(t)}_oX\|^2_F.
\end{gathered}
\end{align*}
When testing for differences in means between two predefined clusters or groups, the statistic in \eqref{test:statistic} yields the standard F test. 
However, for clusters dependent on data via the greedy merging algorithm, the null distribution of this statistic is no longer an F distribution.
Hereafter, we denote the test statistic $R^{(t)}\left(X; \wincluobs{t}{1}, \wincluobs{t}{2}\right)$ as simply $R^{(t)}$, and its observed counterpart, $R^{(t)}\left(X_o; \wincluobs{t}{1}, \wincluobs{t}{2}\right)$, as $R_o^{(t)}$.

For traditional (non-randomized) hierarchical clustering, \citet{yun2023selective} derive the distribution of this statistic under the null hypothesis \eqref{eq:hypothesis} by conditioning on the clustering output, showing that it follows a truncated F distribution.
In general, however, this null distribution is not available in closed form, as the truncation region typically lacks an explicit characterization, except in a few special cases.
Consequently, \cite{yun2023selective} employ Monte Carlo approximations to compute their p-values, which cannot be calculated exactly.
As emphasized earlier in the paper, we take a different approach to this problem by utilizing our randomization scheme to altogether bypass the need to describe the underlying conditioning event.

\paragraph{Preliminaries.} 
The distribution of $R^{(t)}$ conditional on all merges up to step $t$ is difficult to characterize, since the merges made by a clustering algorithm depend on the entire data matrix rather than solely on the test statistic of interest.
To this end, we state a few preliminaries, adopted from \cite{yun2023selective}.

Lemma \ref{lem:recX} identifies auxiliary statistics that, together with the test statistic, enable the reconstruction of the data matrix $X$. 
Consequently, the sequence of merges in a non-randomized clustering merging process is entirely determined by these statistics, whereas in our randomized algorithm, the sequence is determined by both these statistics and the additional randomization introduced into the merging process.

Define the auxiliary statistics $\mathcal A^{(t)}:=(\eta^{(t)},\gamma^{(t)},\Delta^{(t)}, \Gamma^{(t)})$, where
\begin{align}
\begin{gathered}
\eta^{(t)} = dir(B^{(t)}_oX) , \  \gamma^{(t)} = dir(W^{(t)}_oX)\\
\Delta^{(t)} = \|B^{(t)}_oX\|_F^2 + \|W^{(t)}_oX\|^2_F,  \ \Gamma^{(t)} = (I_n - B^{(t)}_o-W^{(t)}_o)X.
\end{gathered}
\label{add:statistics}
\end{align} 
The observed auxiliary statistics, $\mathcal{A}^{(t)}_o = \left(\eta^{(t)}_o, \gamma^{(t)}_o, \Delta^{(t)}_o, \Gamma^{(t)}_o\right)$, are defined as above but with $X$ replaced by $X_o$.
\begin{lemma}
For $r\in\mathcal{R^+}$, define the function
\begin{equation*}
X(r;\mathcal{A}^{(t)}_o) = \sqrt{\Delta^{(t)}_o}\cdot\left(\eta^{(t)}_o\sqrt{\frac{r}{N_o^{(t)}-2+r}} + \gamma^{(t)}_o\sqrt{\frac{N_o^{(t)}-2}{N_o^{(t)}-2+r}}\right) + \Gamma^{(t)}_o,
\end{equation*}
where $N_o^{(t)} = |\wincluobs{t}{1}| + |\wincluobs{t}{2}|$ and $\mathcal{A}_o^{(t)} = \left(\eta^{(t)}_o, \gamma^{(t)}_o, \Delta^{(t)}_o, \Gamma^{(t)}_o\right)$ is the observed auxiliary statistics.
It holds that 
$X(R^{(t)}_o;\mathcal{A}^{(t)}_o)= X_o$,
where $R^{(t)}_o$ is as defined in \eqref{test:statistic}.
\label{lem:recX}
\end{lemma}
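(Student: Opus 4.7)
The plan is to exploit the orthogonal decomposition of $X_o$ into the three pieces picked out by $B_o^{(t)}$, $W_o^{(t)}$, and $I_n - B_o^{(t)} - W_o^{(t)}$, and then rewrite each piece through the auxiliary statistics and the test statistic. First I would verify that $B_o^{(t)}$ and $W_o^{(t)}$ are orthogonal projections onto mutually orthogonal subspaces. This is the geometric fact already recorded above the lemma as property (iv); one way to check it is to compute $W_o^{(t)} \nu_{\mergeobs{t}} = 0$, using that the column space of $B_o^{(t)}$ is spanned by $\nu_{\mergeobs{t}}$ and that each centering block $I_{\wincluobs{t}{j}} - 1_{\wincluobs{t}{j}} 1_{\wincluobs{t}{j}}^\top / |\wincluobs{t}{j}|$ annihilates the vector $\nu_{\mergeobs{t}}$, which is constant on each of the two merged clusters. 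Since these projections are orthogonal, $I_n - B_o^{(t)} - W_o^{(t)}$ is also an orthogonal projection, yielding the decomposition $X_o = B_o^{(t)} X_o + W_o^{(t)} X_o + \Gamma_o^{(t)}$.

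Next, by the definitions of the direction operator and of $\eta_o^{(t)}$ and $\gamma_o^{(t)}$ in \eqref{add:statistics}, I would write $B_o^{(t)} X_o = \|B_o^{(t)} X_o\|_F \, \eta_o^{(t)}$ and $W_o^{(t)} X_o = \|W_o^{(t)} X_o\|_F \, \gamma_o^{(t)}$. It then remains to recover the two Frobenius norms from $R_o^{(t)}$ and $\Delta_o^{(t)}$. The pair of equations
\begin{equation*}
R_o^{(t)} = (N_o^{(t)}-2)\,\frac{\|B_o^{(t)} X_o\|_F^2}{\|W_o^{(t)} X_o\|_F^2}, \qquad \Delta_o^{(t)} = \|B_o^{(t)} X_o\|_F^2 + \|W_o^{(t)} X_o\|_F^2
\end{equation*}
is a simple linear system in the two squared norms, with unique solution
\begin{equation*}
\|B_o^{(t)} X_o\|_F^2 = \Delta_o^{(t)} \cdot \frac{R_o^{(t)}}{R_o^{(t)} + N_o^{(t)} - 2}, \qquad \|W_o^{(t)} X_o\|_F^2 = \Delta_o^{(t)} \cdot \frac{N_o^{(t)} - 2}{R_o^{(t)} + N_o^{(t)} - 2}.
\end{equation*}
Plugging these back into the orthogonal decomposition reproduces exactly the formula for $X(R_o^{(t)}; \mathcal{A}_o^{(t)})$ stated in the lemma.

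I do not anticipate a substantive obstacle here: the argument is a direct algebraic identity. The only step requiring any thought is the orthogonality check, but it follows quickly from the block structure of $W_o^{(t)}$ combined with the cluster-constancy of $\nu_{\mergeobs{t}}$. Once orthogonality is in hand, the rest is bookkeeping with the definitions, and one only needs to confirm that the denominator $R_o^{(t)} + N_o^{(t)} - 2$ is strictly positive, which is immediate since $N_o^{(t)} \geq 2$ whenever a merge is performed and $R_o^{(t)} \geq 0$ by construction.
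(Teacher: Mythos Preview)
Your proposal is correct and follows essentially the same route as the paper's proof: decompose $X_o$ via the orthogonal projections $B_o^{(t)}$, $W_o^{(t)}$, and $I_n - B_o^{(t)} - W_o^{(t)}$, replace each projected piece by its Frobenius norm times its direction, and then recover the two norms from $R_o^{(t)}$ and $\Delta_o^{(t)}$. The only difference is that you spell out the orthogonality verification explicitly, whereas the paper simply invokes property (iv) stated above the lemma.
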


The proof for this result can be found in Appendix \ref{proof:lem2}.

The next lemma 
shows that the test statistic $R^{(t)}$ given the auxiliary statistics has an $F$ distribution under $H_0^{(t)}$, provided that the clusters $(\wincluobs{t}{1}$, $\wincluobs{t}{2})$ are predefined, i.e., their dependence on data is ignored.

\begin{lemma}[\emph{Based on \cite{yun2023selective}}]
\label{lem:inf_target_dist}
Consider the null hypothesis $H_0^{(t)}$ in (\ref{eq:hypothesis}), assuming that the clusters $(\wincluobs{t}{1}$, $\wincluobs{t}{2})$ are predefined.
Then, the distribution of
$$
R^{(t)}  \mid \mathcal A^{(t)}=\mathcal A_o^{(t)}
$$ 
coincides with that of an $F_{p,(N_o^{(t)}-2)p}$ random variable. 
\end{lemma}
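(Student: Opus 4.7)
My plan is to exploit the decomposition $X = B_o^{(t)} X + W_o^{(t)} X + \Gamma^{(t)}$ into three mutually orthogonal components, combined with the matrix-normal assumption and spherical symmetry, to extract the conditional distribution of $R^{(t)}$ in closed form.

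First, I would verify that $B_o^{(t)}$ is a rank-one orthogonal projection, $W_o^{(t)}$ is an orthogonal projection of rank $N_o^{(t)} - 2$, and that $B_o^{(t)} W_o^{(t)} = 0$, so that $I_n - B_o^{(t)} - W_o^{(t)}$ is also an orthogonal projection. Under $H_0^{(t)}$ with the clusters $(\wincluobs{t}{1}, \wincluobs{t}{2})$ treated as fixed, the rows of $\mu$ indexed by $\wincluobs{t}{1} \cup \wincluobs{t}{2}$ coincide, which makes both $B_o^{(t)} \mu$ and $W_o^{(t)} \mu$ vanish. Since $X \sim \mathcal{MN}_{n \times p}(\mu, I_n, \sigma^2 I_p)$ has independent columns with common covariance $\sigma^2 I_n$ and the three projections are mutually orthogonal, the matrices $B_o^{(t)} X$, $W_o^{(t)} X$, and $\Gamma^{(t)}$ are jointly matrix-normal with uncorrelated, hence mutually independent, entries across blocks. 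In particular, $B_o^{(t)} X / \sigma$ and $W_o^{(t)} X / \sigma$ are standard Gaussian matrices supported on subspaces of dimension $p$ and $(N_o^{(t)} - 2)p$, respectively, so the squared norms $U := \|B_o^{(t)} X\|_F^2$ and $V := \|W_o^{(t)} X\|_F^2$ satisfy $U \sim \sigma^2 \chi^2_p$ and $V \sim \sigma^2 \chi^2_{(N_o^{(t)} - 2)p}$, are independent, and are independent of $\Gamma^{(t)}$.

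Next I would invoke spherical symmetry within each subspace: the normalized directions $\eta^{(t)}$ and $\gamma^{(t)}$ are uniformly distributed on their respective unit spheres and independent of the corresponding squared norms. Combining this with the block-independence above, the triple $(\eta^{(t)}, \gamma^{(t)}, \Gamma^{(t)})$ is independent of $(U, V)$, so conditioning on $\mathcal{A}^{(t)} = \mathcal{A}_o^{(t)}$ amounts to conditioning only on $U + V = \Delta_o^{(t)}$. The classical chi-squared/Beta factorization yields $Z := U/(U+V) \sim \mathrm{Beta}(p/2,\, (N_o^{(t)} - 2)p/2)$ independently of $U+V$, so this Beta distribution persists under the conditioning. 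Writing $R^{(t)} = (N_o^{(t)} - 2) \cdot Z/(1 - Z)$ and applying the standard Beta-to-$F$ transformation gives the desired $F_{p,\,(N_o^{(t)}-2) p}$ distribution.

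The main delicate step is the linear-algebra verification of the rank, orthogonality, and null-mean properties of $B_o^{(t)}$ and $W_o^{(t)}$, since $W_o^{(t)}$ is expressed as a sum of two within-cluster centering matrices and the action of $\nu_{\mergeobs{t}}$ on a mean matrix $\mu$ satisfying $H_0^{(t)}$ must be tracked carefully. Once those checks are in place, the chain of orthogonality, Gaussian independence, and spherical-symmetry arguments is routine, and the final Beta-to-$F$ identity is a textbook fact.
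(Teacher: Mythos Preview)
Your proposal is correct and follows essentially the same approach as the paper: both argue that orthogonality of the projections plus the zero-mean property under $H_0^{(t)}$ give mutual independence of $(B_o^{(t)}X, W_o^{(t)}X, \Gamma^{(t)})$, then use spherical symmetry to reduce the conditioning on $\mathcal A^{(t)}$ to conditioning on $\Delta^{(t)}$ alone, and finally invoke the independence of the $\chi^2$-ratio from the $\chi^2$-sum. The only cosmetic difference is that you route the last step through the Beta--$F$ correspondence, whereas the paper states the $F$ conclusion directly.
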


A detailed proof for this Lemma is deferred to Appendix \ref{proof:lem:inf_target_dist}.

Our main result, presented in the next section, further conditions this distribution on the sequence of merges made by our randomized clustering algorithm to obtain the correct distribution of the test statistic and construct a p-value that satisfies \eqref{cond:guarantee}.

\subsection{Constructing a conditionally valid p-value}
To construct a valid p-value for the null hypothesis $H_0^{(t)}$, we characterize the conditional distribution of
\begin{equation}
\label{eq:target_R}
    R^{(t)} \Big\lvert \left\{\mathcal A^{(t)}=\mathcal A_o^{(t)}, \mergecol{t} = \mergecolobs{t} \right\}.
\end{equation}

Theorem \ref{thm:formula_density} derives this conditional distribution, accounting for the highly data-dependent nature of the clustering process by conditioning on $\left\{\mergecol{t} = \mergecolobs{t}\right\}$, the sequence of  merges up to step $t$.
The additional conditioning on the auxiliary statistics reduces the calculation of this conditional distribution to computing one-dimensional integrals, eliminating the need to marginalize over them in the density function.

\begin{theorem}
\label{thm:formula_density}
Under $H_0^{(t)}$ in \eqref{eq:hypothesis}, the cumulative distribution function (CDF) of the conditional distribution of the test statistic $R^{(t)}$ in \eqref{eq:target_R}, evaluated at $r$, is given by
 \[\mathbb{F}^{(t)}(r;\mathcal{A}^{(t)}_o, \mergecolobs{t}) = \frac{\int_0^r\ell_{F_{p,(N_o^{(t)}-2)p}}(r')\times \prod_{s=1}^{t} \selprobtXu{s}{s}{r'}dr'}{\int_0^\infty\ell_{F_{p,(N_o^{(t)}-2)p}}(r')\times \prod_{s=1}^{t}\selprobtXu{s}{s}{r'}dr'},\]
 where $\ell_{F_{p,(N_o^{(t)}-2)p}}(r)$ denotes the density of an $F_{p,(N_o^{(t)}-2)p}$ random variable, and 
 $$\selprobtXu{s}{s}{r'}= \mathbb{P}\left(M^{(s)}= M_o^{(s)} |\mergecol{s-1}=\mergecolobs{s-1},X= X(r'; \mathcal{A}_o^{(t)})\right)$$
  is the sampling probability that $M^{(s)}= M_o^{(s)}$ given $X= X(r'; \mathcal{A}_o^{(t)})$ and the merge history $\mergecol{s-1}=\mergecolobs{s-1}$, for $s\in \{1,2,\ldots, t\}$, as defined in \eqref{eq:sel_prob}. 
\end{theorem}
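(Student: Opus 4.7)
The plan is to derive the conditional density of $R^{(t)}$ given $\{\mathcal{A}^{(t)} = \mathcal{A}_o^{(t)},\, \mergecol{t} = \mergecolobs{t}\}$ via a Bayes-style decomposition that pivots on the reconstruction formula in Lemma \ref{lem:recX}. Since $R^{(t)}$ is continuous and $\mergecol{t}$ is discrete, under $H_0^{(t)}$ I can write the target density as proportional to
$$f_{R^{(t)} \mid \mathcal{A}^{(t)}}(r \mid \mathcal{A}_o^{(t)}) \cdot \bbP_{H_0^{(t)}}\!\left(\mergecol{t} = \mergecolobs{t} \,\Big\lvert\, R^{(t)} = r,\; \mathcal{A}^{(t)} = \mathcal{A}_o^{(t)}\right),$$
and then normalize by the integral of the numerator over $r \in (0,\infty)$. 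I would handle the two factors in turn and assemble the CDF by integration.

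For the first factor, I would invoke Lemma \ref{lem:inf_target_dist}: once the pair $(\wincluobs{t}{1}, \wincluobs{t}{2})$ is treated as a fixed labelling of rows (which is the status it has inside our conditioning event), the distribution of $R^{(t)} \mid \mathcal{A}^{(t)} = \mathcal{A}_o^{(t)}$ coincides with that of an $F_{p,(N_o^{(t)}-2)p}$ variable, so the corresponding density is $\ell_{F_{p,(N_o^{(t)}-2)p}}(r)$. A key feature here is that this density does not depend on the nuisance $\sigma^2$, which is what ultimately makes the CDF exactly computable.

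For the second factor, Lemma \ref{lem:recX} tells us that on the event $\{\mathcal{A}^{(t)} = \mathcal{A}_o^{(t)},\, R^{(t)} = r\}$ the data matrix is deterministically $X(r; \mathcal{A}_o^{(t)})$, so the conditional probability collapses to $\bbP(\mergecol{t} = \mergecolobs{t} \mid X = X(r; \mathcal{A}_o^{(t)}))$. I would then telescope this joint probability via the chain rule: because the external randomization is drawn independently across steps and, given $X$, the selection at step $s$ depends on $X$ only through $D^{(s)}(X)$ (Lemma \ref{lem:condind}), the chain rule collapses to
$$\bbP\!\left(\mergecol{t} = \mergecolobs{t} \,\Big\lvert\, X = x\right) = \prod_{s=1}^{t} p^{(s)}\!\left(\mergeobs{s}; x\right),$$
where each factor is the softmax in \eqref{eq:sel_prob_t}. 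Substituting $x = X(r; \mathcal{A}_o^{(t)})$ yields the product $\prod_{s=1}^{t} \selprobtXu{s}{s}{r}$ appearing in the statement.

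Combining the two factors gives an unnormalized conditional density $\ell_{F_{p,(N_o^{(t)}-2)p}}(r) \prod_{s=1}^{t} \selprobtXu{s}{s}{r}$ on $r \in [0,\infty)$; dividing by its integral over $(0,\infty)$ yields the stated CDF after integration up to $r$. I expect the main obstacle to be the careful bookkeeping of the telescoped product: specifically, verifying that the cluster configuration $\existingcluT{s}$ and the candidate set $\possiblemerges{s}$ are deterministic functions of the history $\mergecolobs{s-1}$ on which we are conditioning, so that $p^{(s)}(\cdot\,;x)$ is unambiguously defined at each step, and confirming that the Bayes decomposition above does not implicitly incur a Jacobian from the deterministic map $r \mapsto X(r; \mathcal{A}_o^{(t)})$ — this holds because the density is posed directly on the $r$-coordinate, with the $\mathcal{A}^{(t)}$-slice carrying no transformation weight.
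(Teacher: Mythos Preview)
Your proposal is correct and follows essentially the same route as the paper's proof: a Bayes-style decomposition of the conditional density into the $F$-density from Lemma~\ref{lem:inf_target_dist} and the selection-probability factor, with Lemma~\ref{lem:recX} used to identify $\{R^{(t)}=r,\,\mathcal{A}^{(t)}=\mathcal{A}_o^{(t)}\}$ with $\{X=X(r;\mathcal{A}_o^{(t)})\}$ and the conditional independence of the randomized merges given $X$ used to factorize into $\prod_{s=1}^{t}\selprobtXu{s}{s}{r}$. Your additional remarks about the $\possiblemerges{s}$ being determined by $\mergecolobs{s-1}$ and the absence of a Jacobian are correct side observations but not points where the paper's argument differs.
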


A detailed proof for this Theorem is deferred to Appendix~\ref{proof:thm1}.

Next, we present the conditional p-value, denoted by $\pvalue = \pvalue\left(R^{(t)};\mathcal{A}^{(t)}\right)$, which is a function of the test statistic and the auxiliary statistics.
Let 
$\pvalue_o = \pvalue\left(R^{(t)}_o; \mathcal{A}^{(t)}_o\right)$
denote its value computed on the observed data $X_o$, defined as
\begin{equation}\label{eq:pval}
 \pvalue_o = \mathbb{P}_{H_0^{(t)}}\left(R^{(t)}\geq R^{(t)}_o \; \Big\lvert \;\mergecol{t} = \mergecolobs{t},\mathcal A^{(t)}=\mathcal A_o^{(t)}\right),
\end{equation}
where $R^{(t)}_o$ is the observed value of the test statistic $R^{(t)}$.

\begin{corollary}
It holds that $\pvalue_o = 1-\mathbb{F}^{(t)}(R^{(t)}_o;\mathcal{A}^{(t)}_o,\mergecolobs{t})$.
\label{cor:pval}
\end{corollary}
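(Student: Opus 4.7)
The plan is to obtain the corollary as a direct consequence of Theorem \ref{thm:formula_density}, since the p-value in \eqref{eq:pval} is by definition the upper-tail probability of the very conditional distribution whose CDF is derived there. So the proof amounts to converting a survival probability into one minus a CDF, once absolute continuity of the distribution is checked.

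Concretely, I would start from the definition
\[
\pvalue_o = \mathbb{P}_{H_0^{(t)}}\!\left(R^{(t)}\geq R^{(t)}_o \; \Big\lvert \;\left\{\mergecol{t} = \mergecolobs{t},\mathcal A^{(t)}=\mathcal A_o^{(t)}\right\}\right),
\]
and recall from the proof of Theorem \ref{thm:formula_density} that, under $H_0^{(t)}$, the conditional law of $R^{(t)}$ given $\{\mergecol{t}=\mergecolobs{t},\mathcal A^{(t)}=\mathcal A_o^{(t)}\}$ has density proportional to $\ell_{F_{p,(N_o^{(t)}-2)p}}(r)\cdot\prod_{s=1}^{t}\selprobtXu{s}{s}{r}$ on $(0,\infty)$. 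Since $\ell_{F_{p,(N_o^{(t)}-2)p}}$ is the density of an absolutely continuous distribution on $(0,\infty)$ and the weight $\prod_{s=1}^{t}\selprobtXu{s}{s}{r}$ is a bounded, strictly positive function of $r$ (each sampling probability in \eqref{eq:sel_prob_t} lies in $(0,1)$), the resulting conditional distribution is absolutely continuous with respect to Lebesgue measure. In particular, the point mass at the observed value satisfies $\mathbb{P}(R^{(t)}=R^{(t)}_o\mid\cdot)=0$.

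Then I would simply write
\[
\pvalue_o = 1-\mathbb{P}_{H_0^{(t)}}\!\left(R^{(t)}< R^{(t)}_o \; \Big\lvert \;\mergecol{t} = \mergecolobs{t},\mathcal A^{(t)}=\mathcal A_o^{(t)}\right) = 1-\mathbb{F}^{(t)}(R^{(t)}_o;\mathcal{A}^{(t)}_o,\mergecolobs{t}),
\]
where the first equality uses $\mathbb{P}(R^{(t)}=R^{(t)}_o\mid\cdot)=0$ and the second equality invokes the CDF formula from Theorem \ref{thm:formula_density}. There is no real obstacle in this argument; the only small point requiring care is justifying the absolute continuity step so that one may replace $\mathbb{P}(R^{(t)}\geq R^{(t)}_o\mid \cdot)$ with $1-\mathbb{F}^{(t)}(R^{(t)}_o;\mathcal{A}^{(t)}_o,\mergecolobs{t})$, and this follows immediately from the explicit density derived in Theorem \ref{thm:formula_density}.
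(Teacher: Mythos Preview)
Your proposal is correct and follows exactly the same approach as the paper, which states in one line that the claim follows directly from the definition of $\pvalue_o$ and the conditional distribution in Theorem~\ref{thm:formula_density}. Your added remark on absolute continuity (to pass from $\mathbb{P}(R^{(t)}\geq R^{(t)}_o\mid\cdot)$ to $1-\mathbb{F}^{(t)}$) is a harmless elaboration that the paper leaves implicit.
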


\begin{proof}
 This claim follows directly from the definition of $\pvalue_o$ and the conditional distribution derived in Theorem \ref{thm:formula_density}.
\end{proof}

In Theorem \ref{thm:cond_p_value}, we establish that the test based on this p-value controls the conditional Type I error, defined in \eqref{cond:guarantee}, at the nominal level $\alpha$.

\begin{theorem}\label{thm:cond_p_value}
We have that
\begin{equation*}
    \bbP_{H_0^{(t)}}\left(\pvalue(R^{(t)};\mathcal{A}^{(t)})\leq \alpha \; | \;\mergecol{t} =\mergecolobs{t}\right) = \alpha, \quad \text{ for any} \quad 0\leq\alpha\leq 1.
\end{equation*}
\end{theorem}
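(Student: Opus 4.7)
The plan is to reduce the claim to the probability integral transform applied to the finer conditioning event $\{\mathcal{A}^{(t)}=\mathcal{A}_o^{(t)}, \mergecol{t}=\mergecolobs{t}\}$, then use the tower property to marginalize the auxiliary statistics out.

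First, I would invoke Theorem \ref{thm:formula_density}, which under $H_0^{(t)}$ gives the explicit CDF $\mathbb{F}^{(t)}(\cdot;\mathcal{A}_o^{(t)},\mergecolobs{t})$ of $R^{(t)}\,|\,\{\mathcal{A}^{(t)}=\mathcal{A}_o^{(t)},\mergecol{t}=\mergecolobs{t}\}$. The integrand defining this CDF is the product of the $F_{p,(N_o^{(t)}-2)p}$ density and the softmax sampling probabilities from \eqref{eq:sel_prob_t}, both of which are strictly positive and continuous in $r'\in(0,\infty)$. Hence $\mathbb{F}^{(t)}$ is a continuous, strictly increasing function of its first argument, so the probability integral transform applies: conditional on $\{\mathcal{A}^{(t)}=\mathcal{A}_o^{(t)},\mergecol{t}=\mergecolobs{t}\}$,
\[
\mathbb{F}^{(t)}\bigl(R^{(t)};\mathcal{A}_o^{(t)},\mergecolobs{t}\bigr)\sim \mathrm{Uniform}(0,1).
\]
By Corollary \ref{cor:pval}, $\pvalue = 1-\mathbb{F}^{(t)}(R^{(t)};\mathcal{A}^{(t)},\mergecol{t})$, so $\pvalue$ is also $\mathrm{Uniform}(0,1)$ conditional on this same event. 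Consequently,
\[
\bbP_{H_0^{(t)}}\!\left(\pvalue\leq \alpha\,\bigl|\,\mathcal{A}^{(t)}=\mathcal{A}_o^{(t)},\mergecol{t}=\mergecolobs{t}\right)=\alpha
\]
for every $\alpha\in[0,1]$ and every realization $\mathcal{A}_o^{(t)}$ of the auxiliary statistics.

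Second, to obtain the theorem as stated, I would marginalize out $\mathcal{A}^{(t)}$ using the tower property of conditional expectation. Since the conditional probability above equals the constant $\alpha$ for every value of $\mathcal{A}_o^{(t)}$,
\[
\bbP_{H_0^{(t)}}\!\left(\pvalue\leq \alpha\,\bigl|\,\mergecol{t}=\mergecolobs{t}\right) = \mathbb{E}_{H_0^{(t)}}\!\left[\bbP_{H_0^{(t)}}\!\left(\pvalue\leq \alpha\,\bigl|\,\mathcal{A}^{(t)},\mergecol{t}\right)\,\Big|\,\mergecol{t}=\mergecolobs{t}\right]=\alpha,
\]
which is exactly the conditional Type I error guarantee \eqref{cond:guarantee}.

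The only genuine subtlety, and the place I would be most careful, is the continuity of $\mathbb{F}^{(t)}$ so that the probability integral transform yields an exact uniform rather than a stochastically smaller distribution: this requires that $r'\mapsto\selprobtXu{s}{s}{r'}$ be strictly positive and continuous along the reconstruction curve $r'\mapsto X(r';\mathcal{A}_o^{(t)})$. Strict positivity follows from the softmax form in \eqref{eq:sel_prob_t} (the denominator is finite and each numerator is nonzero), and continuity follows because $X(r';\mathcal{A}_o^{(t)})$ is continuous in $r'$ by Lemma \ref{lem:recX} and $d(M;\cdot)$ is a continuous function of the data matrix for standard linkages. All remaining steps are routine applications of Bayes' rule and the tower property that have effectively already been carried out in the proof of Theorem \ref{thm:formula_density}.
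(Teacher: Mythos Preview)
Your proposal is correct and follows essentially the same route as the paper's proof: apply the probability integral transform at the finer conditioning level $\{\mathcal{A}^{(t)}=\mathcal{A}_o^{(t)},\mergecol{t}=\mergecolobs{t}\}$ to get exact uniformity of $\pvalue$, then use the tower property to marginalize out $\mathcal{A}^{(t)}$. If anything, you are more careful than the paper in justifying the continuity and strict positivity needed for the probability integral transform to yield an exact (rather than stochastically dominated) uniform.
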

A proof for this result is included in Appendix \ref{proof:thm2}.

\subsection{Computational benefits of our randomization scheme}

Here we highlight the benefits of our randomization scheme by contrasting the p-values in \eqref{eq:pval} with those derived from a truncated F distribution using the traditional clustering algorithm.
Introducing some notations, let $\mathbb{T}^{(t)}(r;\mathcal{A}^{(t)}_o,\mergecolobs{t})$ denote the CDF of an $F_{p,(N^{(t)}_o-2)p}$ random variable truncated to the same observed sequence of merges $\left\{\mergecol{t} = \mergecolobs{t}\right\}$, which is given by
\begin{equation}
  \mathbb{T}^{(t)}(r;\mathcal{A}^{(t)}_o,\mergecolobs{t}) = \frac{\int_0^r\ell_{F_{p,(N_o^{(t)}-2)p}}(r')\times \prod_{s=1}^{t} \mathbbm{1}\{\mergeobs{s} \in \Psi^{*(s)}(r')\}dr'}{\int_0^\infty\ell_{F_{p,(N_o^{(t)}-2)p}}(r')\times \prod_{s=1}^{t}\mathbbm{1}\{\mergeobs{s} \in \Psi^{*(s)}(r')\}dr'},  
  \label{eqn:hard:trunc}
\end{equation}
where $\Psi^{*(s)}(r') =\displaystyle\argmin_{M \in \mathcal{M}_o^{(s)}} d(M;X(r';\mathcal{A}^{(t)}_o))$.  
This yields the p-value $\pvalue_{\text{NR},o}=1 - \mathbb{T}^{(t)}(R^{(t)}_o;\mathcal{A}^{(t)}_o,\mergecolobs{t})$, constructed without external randomization, which coincides with the p-value in Theorem~2 of \cite{yun2023selective} if the sequence of merges leading to the merge $\mergeobs{t}$ at step $t$ is unique.

Proposition~\ref{prop:pvals_conv} shows that, as the amount of randomization becomes smaller, i.e., as $\tau^*\rightarrow 0$, our p-values converge to $\pvalue_{\text{NR},o}=1 - \mathbb{T}^{(t)}(R^{(t)}_o;\mathcal{A}^{(t)}_o,\mergecolobs{t})$.

\begin{proposition}
It holds that, as $\tau^* \rightarrow 0$,
$\pvalue_o(\tau^*) {\rightarrow }\pvalue_{\text{NR},o}$.
\label{prop:pvals_conv}
\end{proposition}

A proof for this result is deferred to Appendix \ref{proof:prop2}.

The p-value $\pvalue_{\text{NR},o}$ is, however, not available in closed form, since the event $\left\{\mergecol{t} = \mergecolobs{t}\right\}$ (see the indicators  in the numerator and denominator of \eqref{eqn:hard:trunc}) lacks an analytical description except for certain special linkages and cases.
For example, \cite{yun2023selective} uses Monte Carlo approximations to compute p-values when more than two clusters are present, i.e., at any intermediate step of the merging process.
Similarly, \cite{gao2024selective} rely on Monte Carlo approximations to compute p-values for complete-linkage clustering, and do not compute p-values for linkages such as the minimax linkage or for dissimilarities outside the class of squared Euclidean distances.
Our p-values, on the other hand (see the CDF in Theorem~\ref{thm:formula_density}), do not require an explicit, linkage- and dissimilarity-specific characterization of the conditioning event. 
Instead, the correction for double dipping is derived directly from the sampling probabilities of the observed sequence of merges, defined in \eqref{eq:sel_prob}, which already encode the linkage dissimilarity and are available in closed-form.

Finally, in Appendix \ref{subsec:non_spherical}, we develop conditional p-values for the type of test proposed by \cite{gao2024selective} assuming a normal model with a known non-spherical covariance matrix $\Sigma$.
By using a different decomposition of the data matrix into the test statistic and appropriately defined auxiliary statistics, the CDF of the relevant conditional distribution is derived from the sampling probabilities in our randomization scheme, analogous to the result in Theorem \ref{thm:formula_density}.
This CDF yields valid p-values with the conditional Type I error guarantee in \eqref{cond:guarantee}.
\section{Using conditional p-values to choose the number of clusters}
\label{sec:estimateK}


We develop an adaptive $\alpha$-spending procedure \citep{foster2008alpha, aharoni2014generalized} that leverages the conditionally valid p-values from the previous section to estimate the number of clusters.
Specifically, we begin with a pre-defined sequence $\{\alpha_1, \dots, \alpha_{n-1}\}$ satisfying $\sum_{j=1}^{n-1} \alpha_j = \alpha$, where $(n-1)$ is the total number of merges.
At each merge in the dendrogram, we compute the p-value $\pvalue_o$ for the null hypothesis specified in \eqref{eq:hypothesis} and compare $\pvalue_o$ to a threshold $\alpha^{(t)}$ selected from $\mathcal{S}$, a subset of unused $\alpha$ values at step $t$. The procedure terminates as soon as there is sufficient evidence against a merge made by Algorithm \ref{alg:RAC}. 

The choice of $\alpha^{(t)}$ is allowed to depend adaptively on the sizes of the clusters being merged, and, by design, also on the $\alpha$-budget already allocated up to time $t$, which permits more liberal thresholds for merges involving larger clusters, leading to a more powerful procedure.

The full procedure is summarized in Algorithm \ref{alg:alpha_spending_stop}. 
In our implementation, we introduce a minimum cluster size cutoff $n_{\text{min}}$ such that if any cluster has size below this threshold, the corresponding merges are automatically accepted without testing. 
Additionally, if the smaller cluster size lies between $n_{\text{min}}$ and the $\alpha$-spending threshold cluster size $n^*$
 at any step $t$, we assign a smaller $\alpha^{(t)}$ from the list of unused elements of $\mathcal{S}$. 
If $n^*$ is chosen to be less than or equal to $n_{\text{min}}$, this step is skipped. 
This type of adaptive allocation of $\alpha$-values is consistent with our goal of using more liberal thresholds for merges involving larger clusters with the goal of obtaining a more powerful procedure.

We next develop a probabilistic guarantee for our algorithm against overestimation. 
Suppose $K^*$ is the true number of clusters in the data, and let $t^* = n - K^* + 1$. 
Denote by $\mathcal{T}$ the event that the true cluster structure is preserved when the randomized hierarchical clustering algorithm is executed up to step $t^*$, i.e., that no cross-cluster merges occur before $t^*$.

\begin{theorem}
Suppose Algorithm \ref{alg:alpha_spending_stop} is executed with total significance level $\alpha$ and a pre-specified sequence ${\alpha_1,\alpha_2,\dots,\alpha_{n-1}}$ satisfying $\sum^{n-1}_{j=1} \alpha_j = \alpha$. 
Then the $\alpha$-spending procedure summarized in Algorithm \ref{alg:alpha_spending_stop} guarantees $\bbP\left(\{\widehat{K}> K^*\} \cap \mathcal{T} \right)\leq \alpha$.

\label{thm:overest:control}
\end{theorem}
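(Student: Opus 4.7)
The plan is to reduce the overestimation event to a union of null-rejection events at individual merges, then combine the conditional Type I error control from Theorem \ref{thm:cond_p_value} with the $\alpha$-budget constraint via a union bound.

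First I would unpack the event $\{\widehat{K}>K^*\}\cap\mathcal{T}$. The estimate $\widehat{K}$ equals $n-t+1$ if the algorithm stops at step $t$ (i.e., the merge at step $t$ is rejected as the first rejection). Since $K^*=n-t^*+1$, the event $\{\widehat{K}>K^*\}$ is exactly the event that the first rejection occurs at some step $t\in\{1,\ldots,t^*-1\}$. On the event $\mathcal{T}$, all merges $\merge{1},\ldots,\merge{t^*-1}$ occur entirely within true clusters, so $H_0^{(t)}$ in \eqref{eq:hypothesis} holds for every $t<t^*$. Letting $\mathcal{T}_t$ denote the event that the first $t$ merges are within-cluster, we have $\mathcal{T}\subseteq\mathcal{T}_t$ for $t\leq t^*-1$, and
\[
\{\widehat{K}>K^*\}\cap\mathcal{T}\;\subseteq\;\bigcup_{t=1}^{t^*-1}\{\pvalue_o\leq\alpha^{(t)}\}\cap\mathcal{T}_t.
\]

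Next I would apply a union bound followed by conditioning on $\mergecol{t}$. Because $\alpha^{(t)}$ is selected based on the cluster sizes and the history of previously used thresholds, it is measurable with respect to $\mergecol{t}$; moreover, $\mathbbm{1}_{\mathcal{T}_t}$ is also $\mergecol{t}$-measurable. Therefore, on $\mathcal{T}_t$ the hypothesis $H_0^{(t)}$ describes the true data law, and Theorem \ref{thm:cond_p_value} applied conditionally gives
\begin{equation*}
\bbP\!\left(\pvalue_o\leq\alpha^{(t)},\,\mathcal{T}_t\;\Big|\;\mergecol{t}\right)
=\mathbbm{1}_{\mathcal{T}_t}\cdot\alpha^{(t)}.
\end{equation*}
Taking expectations and summing over $t$, I obtain
\[
\bbP\!\left(\{\widehat{K}>K^*\}\cap\mathcal{T}\right)
\;\leq\;\sum_{t=1}^{t^*-1}\mathbb{E}\!\left[\mathbbm{1}_{\mathcal{T}_t}\cdot\alpha^{(t)}\right]
\;\leq\;\mathbb{E}\!\left[\sum_{t=1}^{t^*-1}\alpha^{(t)}\right].
\]

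Finally I would invoke the $\alpha$-spending budget: by construction of Algorithm \ref{alg:alpha_spending_stop}, each $\alpha^{(t)}$ is drawn without replacement from $\{\alpha_1,\ldots,\alpha_{n-1}\}$ (with $\alpha^{(t)}=0$ when no test is performed, e.g., when a cluster is below the $n_{\text{min}}$ cutoff), so $\sum_{t=1}^{n-1}\alpha^{(t)}\leq\sum_{j=1}^{n-1}\alpha_j=\alpha$ pathwise. The desired bound $\alpha$ follows immediately.

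The main obstacle I anticipate is the careful handling of the adaptive, data-driven threshold $\alpha^{(t)}$. The identity from Theorem \ref{thm:cond_p_value} is stated under $H_0^{(t)}$ at a fixed merge, but here the null holds only on the random event $\mathcal{T}_t$ and $\alpha^{(t)}$ itself depends on the previous merges. The resolution is to condition on $\mergecol{t}$ so that both $\mathbbm{1}_{\mathcal{T}_t}$ and $\alpha^{(t)}$ become constants, after which the conditional Type I error guarantee applies verbatim; the without-replacement structure of the $\alpha$-budget then turns the pointwise sum into a deterministic bound. A secondary subtlety is to verify that the stopping decisions are captured by the union of first-rejection events, which is immediate from the description of the sequential testing procedure.
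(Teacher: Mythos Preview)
Your proposal is correct and follows essentially the same route as the paper: both reduce $\{\widehat K>K^*\}\cap\mathcal{T}$ to a union of null-rejection events over merges $t<t^*$, condition on $\mergecol{t}$ so that $\alpha^{(t)}$ and the null-indicator become constants, apply Theorem~\ref{thm:cond_p_value}, and close with the without-replacement $\alpha$-budget. The paper packages the first few steps as a separate FWER lemma (Lemma~\ref{lem:FWER}) and then observes $\{\widehat K>K^*\}\cap\mathcal{T}\subseteq\mathcal{V}$, but the underlying argument is the same; your explicit introduction of $\mathcal{T}_t$ and the remark that untested merges contribute $\alpha^{(t)}=0$ are nice clarifications.
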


A proof for this result is included in Appendix \ref{proof:thm3}. 
It proceeds by demonstrating that the proposed sequential testing procedure satisfies family-wise error rate (FWER) control, which is shown in Lemma \ref{lem:FWER}.

Note that Theorem \ref{thm:overest:control} guarantees that we can control the probability of overestimating $K^*$. 
This comparison between $\widehat{K}$ and $K^*$ is meaningful only when the true clustering is obtained along the hierarchical clustering path. 
Put another way, this result ensures that we are unlikely to split true clusters unnecessarily.

The numerical experiments on simulated and real data in Sections \ref{sec:chooseK_experiments} and \ref{sec:realdata} investigate the potential of the proposed $\alpha$-spending procedure for estimating the dendrogram cut height. 
The resulting estimate of the number of clusters can then be used to assess standard metrics for clustering stability across repeated runs of our randomized algorithm or clustering algorithms with other forms of randomization. 
Across a variety of simulation settings, we find that our testing procedure for estimating the number of clusters yields a powerful clustering method that outperforms standard (i.e., non-randomized) clustering approaches paired with the popular gap statistic \citep{tibshirani2001estimating}.

\begin{algorithm}[htbp]
\setstretch{1.5}
\caption{Adaptive Alpha-Spending with Early Stopping at First Rejection}
\label{alg:alpha_spending_stop}
\begin{algorithmic}[1]
\State \textbf{Input:} Significance level $\alpha$, pre-defined sequence $\{\alpha_1, \alpha_2, \dots, \alpha_{n-1}\}$ such that $\sum_{j=1}^{n-1} \alpha_j = \alpha$, merge sequence $\mergecolobs{n-1} = \{(\wincluobs{t}{1},\wincluobs{t}{2})\}_{t=1}^{n-1}$ obtained from Algorithm \ref{alg:RAC} with randomization level $\tau^*$, testable cluster size cutoff $n_{\text{min}}$, $\alpha$-spending threshold cluster size $n^*$
\State \textbf{Initialize:} Available set $\mathcal{S} =\{\alpha_1, \alpha_2, \dots, \alpha_{n-1}\}$
\For{$t = 1$ to $n-1$}
    \If{$\min(|\wincluobs{t}{1}|, |\wincluobs{t}{2}|)\leq n_{\text{min}}$}
    \State No test is performed, and we proceed to the next merge
    \Else
    \If{$\min(|\wincluobs{t}{1}|, |\wincluobs{t}{2}|) \leq n^* $}
        \State Select test level $\alpha^{(t)} \gets \min(\mathcal{S})$ \Comment{More conservative for small clusters}
    \Else
        \State Select test level $\alpha^{(t)} \gets \max(\mathcal{S})$ \Comment{More power for larger clusters}
    \EndIf
    \State Test the null hypothesis 
    in \eqref{eq:hypothesis} by constructing the p-value $\pvalue_o$ in \eqref{eq:pval} 
    \EndIf
    \State Update available alpha set: $\mathcal{S} \gets \mathcal{S} \setminus \{\alpha^{(t)}\}$
    \If{$\pvalue_o  < \alpha^{(t)}$}
        \State Set estimated number of clusters $\widehat{K} = n - t + 1$
        \State \textbf{Output:} $\widehat{K}$
        \State \textbf{Break}
    \EndIf
\EndFor
\State \textbf{If no rejection:} Set $\widehat{K} = 1$
\State \textbf{Output:} $\widehat{K}$
\end{algorithmic}
\end{algorithm}

\section{Simulation and real data analysis}
\label{sec:simulations}

We turn now to empirical evaluations of the methodology developed in the previous sections.  We begin, in Section~\ref{subsec:clustering_quality}, by studying the effect of different levels of randomization on the quality of a clustering.  In Section~\ref{sec:validitypower}, we evaluate the validity and power of our proposed conditional p-value.  Section~\ref{sec:chooseK_experiments} assesses the effectiveness of proposed procedure for choosing $K$.  Finally, Section~\ref{sec:realdata} applies our method in a real-world scenario.
Software implementations of our tests are available in Python at \url{https://github.com/judywu4800/SI_HierarchicalClustering} and in R at \url{https://github.com/jacobbien/randhclust-project}.

\subsection{How does randomization affect clustering quality?}
\label{subsec:clustering_quality}

A natural question is how the introduced randomization impacts the quality of the resulting clustering.
To investigate this, we evaluate clustering performance across different values of $\tau$ in Algorithm \ref{alg:RAC} using the following two standard metrics: (i) the ratio between the within-cluster sum of squares (WCSS) and the total sum of squares (TSS), (ii) the Adjusted Rand Index (ARI), which measures the agreement between the estimated clustering $\widehat{C} = \{\widehat{C}_1, \dots, \widehat{C}_K\}$ and the true labels $C = \{C_1, \dots, C_{K'}\}$ of $n$ observations. The exact definitions of the two metrics are provided in Appendix~\ref{subsubsec:clustering_metrics}. Smaller WCSS/TSS values indicate more well-defined and compact clusters, while the ARI ranges from $-1$ to $1$, with values close to $1$ indicating near perfect agreement between the estimated and true clustering.

We generate 500 synthetic datasets, each consisting of $n = 30$ observations in $p = 2$ dimensions, drawn from $\mathcal{N}(\mu, \sigma^2 I_p)$ with variance $\sigma^2 = 1$. 
The true number of clusters in this setting is $K = 2$. 
Following the simulation setup in \citep{yun2023selective}, the data observation are divided into two clusters 
\begin{equation}\label{eq:2clutsters}
    \mu_1 = \dots = \mu_{n/2} = \begin{bmatrix}
        0\\0
    \end{bmatrix}, \mu_{n/2+1}=\dots= \mu_{n} = \begin{bmatrix}
        \delta\\0
    \end{bmatrix}
\end{equation}
where $\delta > 0$ controls the signal strength.
Each cluster contains $15$ observations. 

First, we apply complete-linkage randomized clustering with the true number of clusters, $K=2$ and $\delta = 6$, and compare the clustering performance for $\tau^* \in \{0, 0.025, 0.05, 0.1, 0.25, 0.5, 1, 5\}$. 
Note that $\tau^* = 0$ corresponds to the traditional agglomerative clustering algorithm, depicted as RC(0), while the nonzero values of $\tau^*$, in ascending order, correspond to randomization levels $1$-$7$, denoted RC(1) through RL(7). 

\begin{figure}[h]
        \centering
        \includegraphics[width = \linewidth]{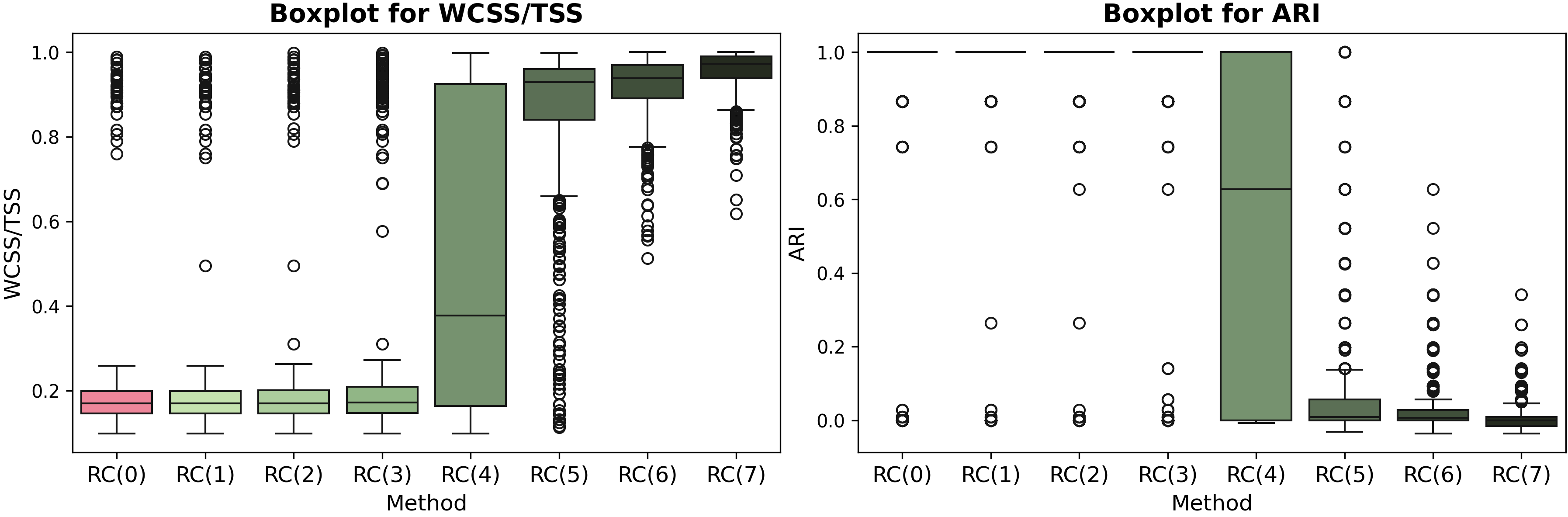}
    \caption{Comparison of clustering quality metrics under varying levels of randomization $\tau^*$. (Left): Boxplots of the ratio between within-cluster sum of squares (WCSS) and total sum of squares (TSS), showing how cluster compactness changes with $\tau^*$. (Right): Boxplots of the Adjusted Rand Index (ARI), measuring agreement with the true clustering, which declines as $\tau^*$ increases.} 
    \label{fig:clustering_quality}
\end{figure}

The results, shown in Figure \ref{fig:clustering_quality}, indicate that for small values of $\tau^*$, the randomized clustering yields results that are comparable to the deterministic agglomerative clustering in terms of both compactness of the clustering structure and label recovery. 
Consistent with expectations, as $\tau^*$ increases, the external randomness begins to dominate the merging process, resulting in less stable clusters and a clear decline in clustering quality, as reflected by higher WCSS/TSS and lower ARI.
For example, a sudden decline in clustering quality is observed for RC(4), corresponding to $\tau^* = 0.25$. 
These patterns are consistent across all three metrics used to evaluate clustering performance.

We also investigate the clustering performance under varying signal strengths $\delta$, with the corresponding results reported in Appendix~\ref{subsubsec:clustering_qual_add}.

\subsection{Evaluating p-value validity and power}
\label{sec:validitypower}

We now evaluate the validity and power of the proposed conditional p-value, focusing on its null distribution, Type I error control, and power under alternatives.

\paragraph{Type I error control.} To demonstrate Type I error control, as guaranteed by Theorem \ref{thm:cond_p_value}, we plot the ECDF of the randomized p-values $\pvalue_o$ against the uniform reference distribution and summarize the empirical Type I error rates using boxplots.

We conduct two main experiments: (i) we first show that our test  yields valid p-values for any degree of randomization, while the naive p-values suffer from the double-dipping issue and consequently yield invalid tests; (ii) 
we then compare the ECDF of the p-values obtained at randomization level $\tau^* = 0.10$ (i.e., RC(3)) with the adjusted chi-squared test from \cite{gao2024selective}, which uses the plug-in $\sigma^2$ estimate $\widehat{\sigma}^2= \frac{1}{(n-1)p}\sum_{i=1}^n\|X_i - \bar{X}\|_2^2$, and with the adjusted F test from \cite{yun2023selective}. For brevity, the detailed results of the second experiment are deferred to Appendix~\ref{subsec:valdity_additional}.

To obtain the ECDF, we perform 2000 independent trials, generating the rows of the data matrix $X$ in each trial as independent draws: $X_i \sim \mathcal{N}(\mu_i,\sigma^2 I_p), \quad i\in \{1,2,\ldots, n\}$.
For each $i$, the mean vectors are set to $\mu_i = 0_p$, so that any selected null hypothesis is true.
We fix $\sigma = 1$, $n = 30$, and $p = 10$.  
Alignment with the $y = x$ line in the ECDF plot indicates the validity of the p-values.
To evaluate Type I error control, we adopt the same data generation process as before. 
For each repetition, we perform $200$ trials to estimate the empirical Type I error rate, and we repeat this process $100$ times to construct the boxplots. 

In the first experiment, we perform complete-linkage hierarchical clustering with RC(1)-RC(7) until the data are partitioned into $K=2$ clusters, respectively, and compute p-values for the resulting clusters.
As shown in Figure \ref{fig:validity_random}, the proposed method produces p-values that are uniformly distributed under the null hypothesis and successfully controls the Type I error rate, at all levels of randomization. 
By contrast, the naive method yields excessively small p-values with a severely inflated Type I error rate.

\begin{figure}[h]
  \centering
  \includegraphics[width = 0.8\textwidth, height=0.4\textheight, keepaspectratio]{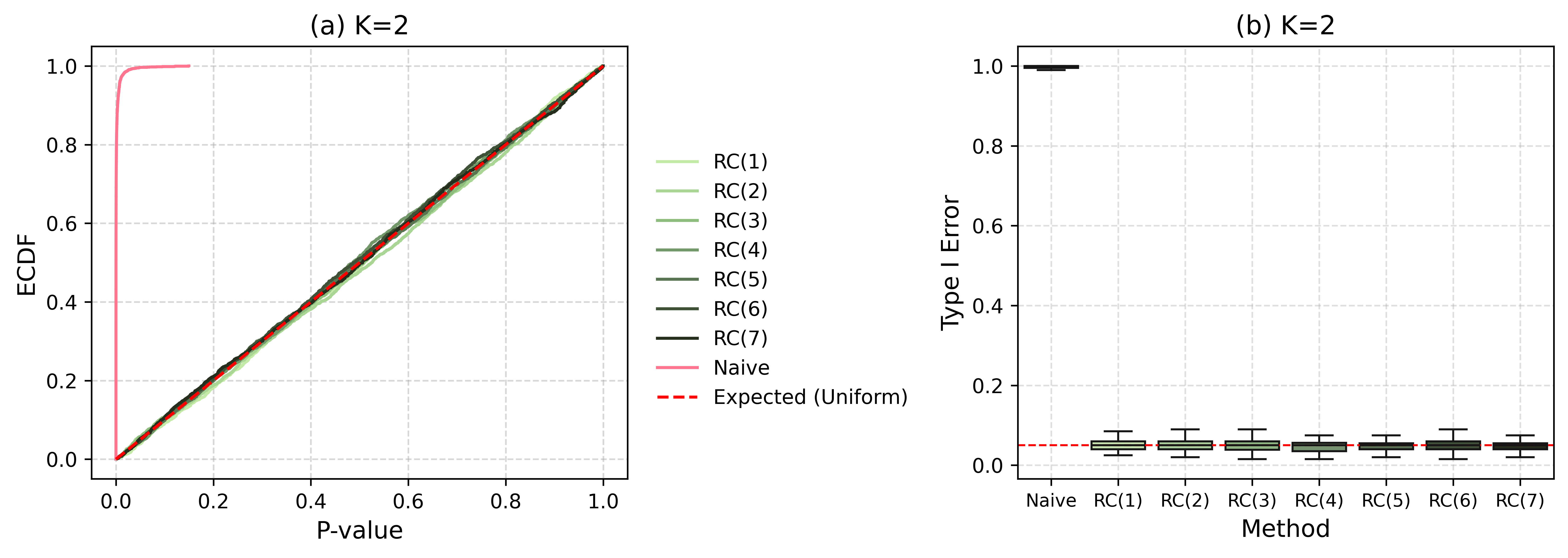}

  \caption{Comparison of the p-value ECDFs and Type I error rates simulated under the null hypothesis. (Left) ECDF plots for the proposed and baseline methods. (Right) Boxplots of the Type I error rates across different methods.}
  \label{fig:validity_random}
\end{figure}

In the second experiment, we compare the proposed p-values against the p-values from the selective inferential approaches in  \citep{gao2024selective} and \citep{yun2023selective}. 
The results confirm that the proposed method attains comparable validity.
The detailed results are deferred to Appendix \ref{subsec:valdity_additional}.

\paragraph{Power under alternatives.}
To evaluate power, we generate data from $K=2$ true clusters according to \eqref{eq:2clutsters} with $n = 30$ and $\sigma = 1$. 
We consider nine evenly spaced values of $\delta \in [1,10]$ and simulate $2000$ datasets for each $\delta$. 
For each dataset, we conduct the hypothesis test in \eqref{eq:hypothesis} at a significance level of $\alpha = 0.05$.

We compute the probability of rejecting the alternative as a function of effect size, which we define as $\text{ES}:= d(M;\mu)$,
where $d(M;\mu)$ is the dissimilarity between the true cluster means $\mu\in\mathbb R^{n\times p}$, based on the linkage function employed in the clustering algorithm.  
We note that $\text{ES}=0$ under the null for any linkage, i.e. $d(M;\mu)=0$.

To estimate the rejection probability as a function of effect size, we bin the simulated outcomes into $10$ equally spaced intervals based on their effect sizes. 
Within each bin, we calculate the proportion of rejections to obtain the estimated power.
Confidence intervals are constructed for each bin using a normal approximation to the binomial distribution.
Figure \ref{fig:power} summarizes the empirical power of the proposed randomized method with $\tau^* = 0.10$ (i.e., RC(3)) across different linkage functions and true cluster numbers $K=2$, and compares it against the two other existing selective inference methods.

\begin{figure}[h]
    \centering
    \includegraphics[width=\linewidth]{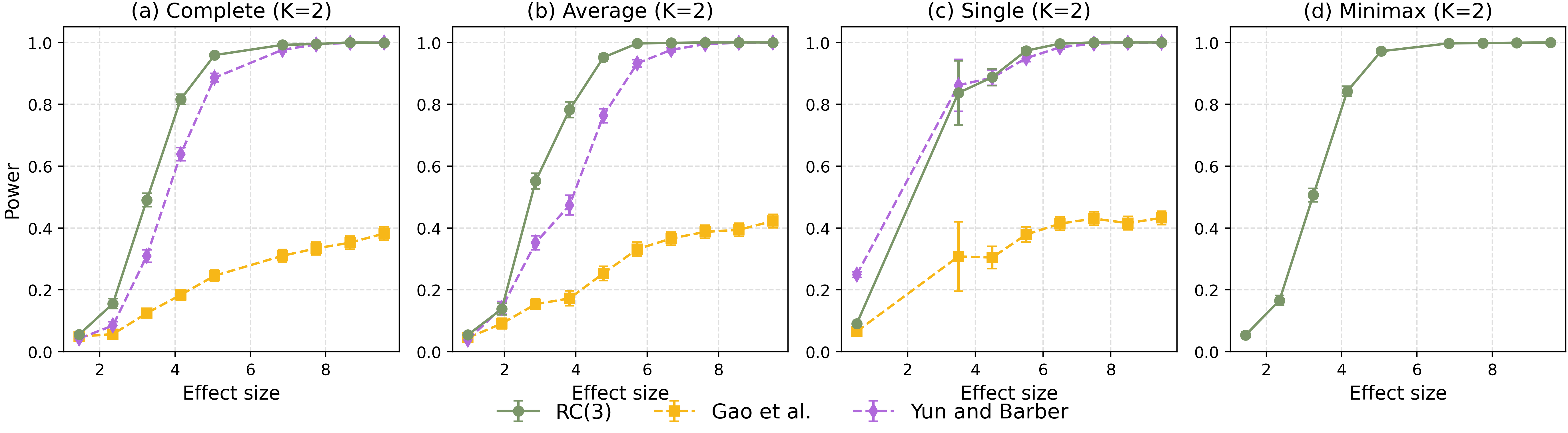}
    \caption{Empirical power curves as a function of effect size for the proposed randomized method with $\tau^* = 0.10$, compared with the two selective inference approaches under varying choice of linkage functions (complete, single, average and minimax) and true number of clusters $K=2$.}
    \label{fig:power}
\end{figure}

As evident from Figure \ref{fig:power}, the proposed method consistently attains higher power than existing selective inference approaches, with the only exception being scenarios involving smaller effect sizes under single linkage clustering.

\subsection{Evaluating our procedure for choosing the number of clusters}
\label{sec:chooseK_experiments}

We next evaluate the empirical performance of our estimation procedure for determining the number of clusters, as presented in Algorithm \ref{alg:alpha_spending_stop}, using the conditional p-values developed in Section \ref{sec:pvals}.
We first demonstrate that the sequential testing procedure achieves valid family-wise error rate (FWER) control under the global null, and then compare its ability to estimate the number of clusters with that of the widely used gap statistic proposed in \cite{tibshirani2001estimating} across a wide range of settings.

In our experiments, we employ an exponentially decaying $\alpha$-sequence of the form $\alpha_j \propto \exp{(-0.5j)}$ for $i = 1,\dots, n-1$, normalized such that the total testing budget sums to $\alpha = 0.05$. 
We set the cutoffs for the minimum testable cluster size and for deciding between a more liberal versus conservative choice of $\alpha$ in Algorithm \ref{alg:alpha_spending_stop} to $n_{\text{min}} = 0.1\times n $ and $n^* = 0.4\times n$, respectively. 
Throughout all experiments, the randomization level is fixed at $\tau^* = 0.10$, corresponding to RC(3).

To assess FWER control, we generate data under the null model in \eqref{eq:hypothesis} with $n=30$, $p=2$, and $\sigma=1$, so that no clustering structure is present. In this setting, selecting $\widehat{K}>1$ is equivalent to rejecting at least one true null hypothesis in the merge sequence. The empirical FWER is therefore the proportion of repetitions for which $\widehat{K}>1$. As shown in Table \ref{tab:fwer}, across $2000$ independent repetitions, the proposed procedure, under various randomization levels, keeps this fraction below the target level $\alpha = 0.05$. 
By contrast, Algorithm \ref{alg:alpha_spending_stop} with naive p-values (denoted Naive) fails to maintain this control.

\begin{table}[ht]
\centering
\caption{Empirical FWER for Naive and Randomized methods.}\label{tab:fwer}
\begin{tabular}{c|cccccccc}
\hline
Method  & Naive & RC(1) & RC(2) & RC(3) & RC(4) & RC(5) & RC(6) & RC(7) \\
\hline
FWER & 0.6080 & 0.0080 & 0.0070 & 0.0055 & 0.0065 & 0.0055 & 0.0050 & 0.0045 \\
\hline
\end{tabular}
\end{table}

We next evaluate how accurately the method recovers the number of clusters when true clustering structure is present. 
We compare our approach with the results from using the gap statistic in \cite{tibshirani2001estimating}, which selects $\widehat{K}$ by assessing whether the clusters obtained from the observed data are significantly tighter than those formed in reference datasets lacking structure.
Specifically, for each candidate $K$, the gap statistic computes
$\text{Gap}(K) = \mathbb{E}_{\text{ref}}[\log W_K] - \log(W_K)$,
where $W_K$ denotes the within‐cluster sum of squares. 
The selected number of clusters is then defined as
$\widehat{K}_{\text{Gap}} = \min\{K:\text{Gap}(K)\geq\text{Gap}(K+1) -s_{K+1}\},$
where $s_{K}$ is the standard deviation adjustment term defined in \citep{tibshirani2001estimating}.

We use data with $n=30$, $\sigma=1$, and $K^* = 3$, generated from three equidistant clusters in \eqref{eq:3clutsters} with between-cluster separation $\delta \in \{4,6,8,10,12,14\}$. 
For each setting, $100$ datasets are simulated, and the selected values of $\widehat{K}$ are summarized using histograms in Figure \ref{fig:paired_histogram}. 
Each histogram displays the frequency of the different $\widehat{K}$ values obtained across repeated experimental runs for both methods.

\begin{figure}[h]
    \centering
    \includegraphics[width=0.9\linewidth, height=0.7\textheight, keepaspectratio]{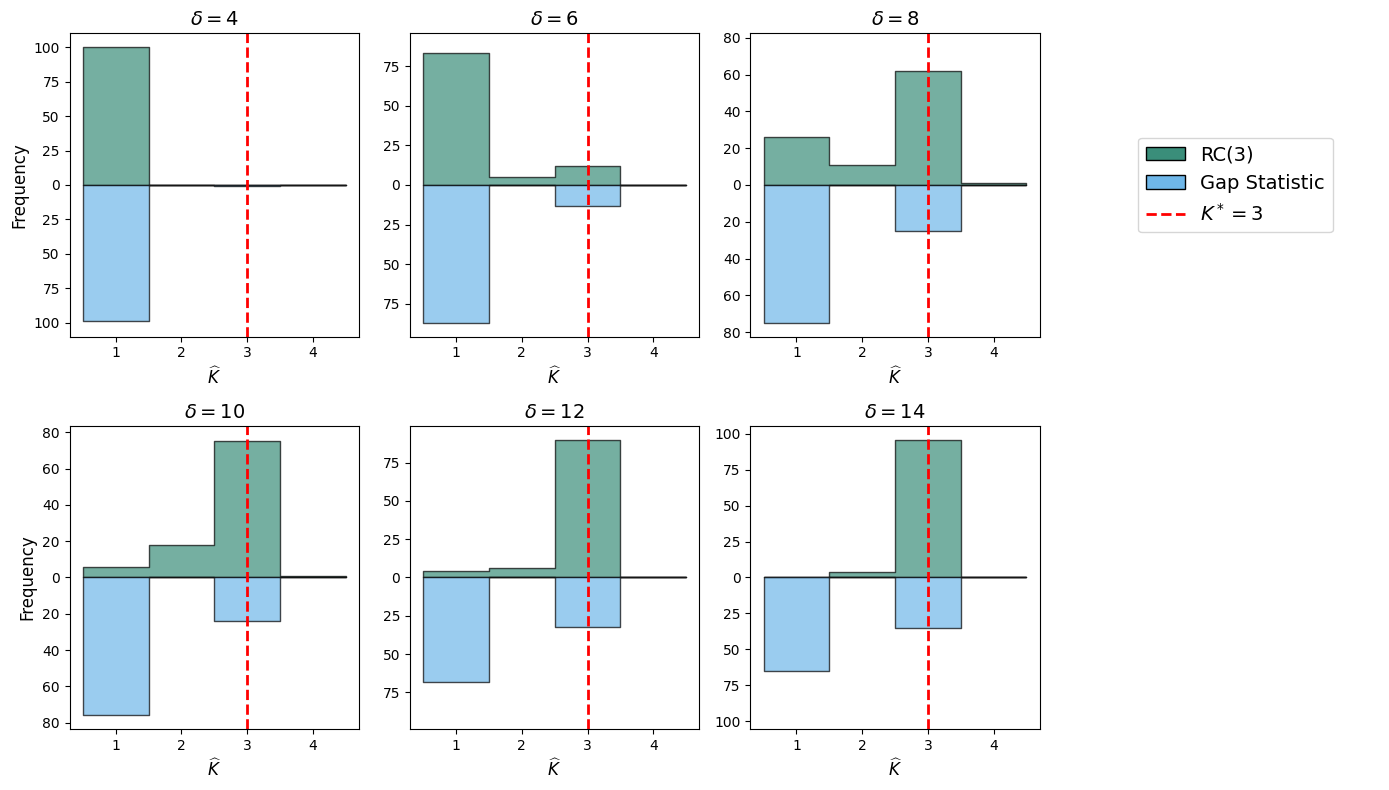}
    \caption{Paired histograms of the $\widehat{K}$ values selected by our proposed method and the gap statistic across varying values of $\delta$. As $\delta$ increases, our method consistently recovers the true number of clusters, while the gap statistic remains overly conservative, estimating $\widehat{K}=1$.}
    \label{fig:paired_histogram}
\end{figure}

Figure \ref{fig:paired_histogram} demonstrates that our sequential testing procedure more consistently recovers the correct clustering structure than the gap statistic across all settings. As cluster separation increases, it nearly always identifies the true $K$. By contrast, the gap statistic remains overly conservative and frequently underestimates the true number of clusters, even for well-separated clusters.
We also evaluate the performance of Algorithm~\ref{alg:alpha_spending_stop} under varying true numbers of clusters $K^*\in\{1,2,\dots,10\}$. Overall, the proposed method accurately recovers the true number of clusters when $K^*$ is small and reveals more clustering structure than the gap statistic as $K^*$ increases. The results of this experiment are provided in Appendix~\ref{subsec:varyK_addtional}.

\subsection{Real data analysis}
\label{sec:realdata}

In this section, we analyze the penguin dataset \citep{palmerpenguins} that was previously studied in \cite{gao2024selective,yun2023selective}. We focus on a subset of 107 female penguins observed during the years 2007–2008, using the three species \textit{Adelie}, \textit{Chinstrap}, and \textit{Gentoo} as ground-truth labels.
The clustering analysis is based solely on the two continuous measurements, bill length and flipper length.

First, we apply Algorithm \ref{alg:alpha_spending_stop} to estimate the number of clusters $K$, and then use the randomized hierarchical clustering algorithm in Algorithm \ref{alg:RAC} to assess the stability of the resulting clustering structure across repeated draws of randomization.
\begin{enumerate}
\item[(1):] To estimate the number of clusters $\widehat{K}$, we run $100$ independent repetitions of Algorithm \ref{alg:alpha_spending_stop} using complete linkage and a randomization level of $\tau^* = 0.10$, and select the mode of the resulting distribution as our estimate $\widehat{K}$. We employ the same $\alpha$-sequence, $n_{\min}$, and $n^*$ as described in Section \ref{sec:chooseK_experiments}.
\item[(2):] After selecting $\widehat{K}$, we repeatedly run our randomized clustering procedure $500$ times to assess stability of the clustering results. 
The pairwise co-occurrence matrix is constructed as follows: for every pair of samples $(i,j)$, we count the number of times they are assigned to the same cluster across all runs and divide by the total number of runs. Sharper color contrasts between blocks indicate more stable clustering results.
\end{enumerate}

\begin{figure}[h]
    \centering
    \includegraphics[width=0.9\linewidth, height=0.7\textheight, keepaspectratio]{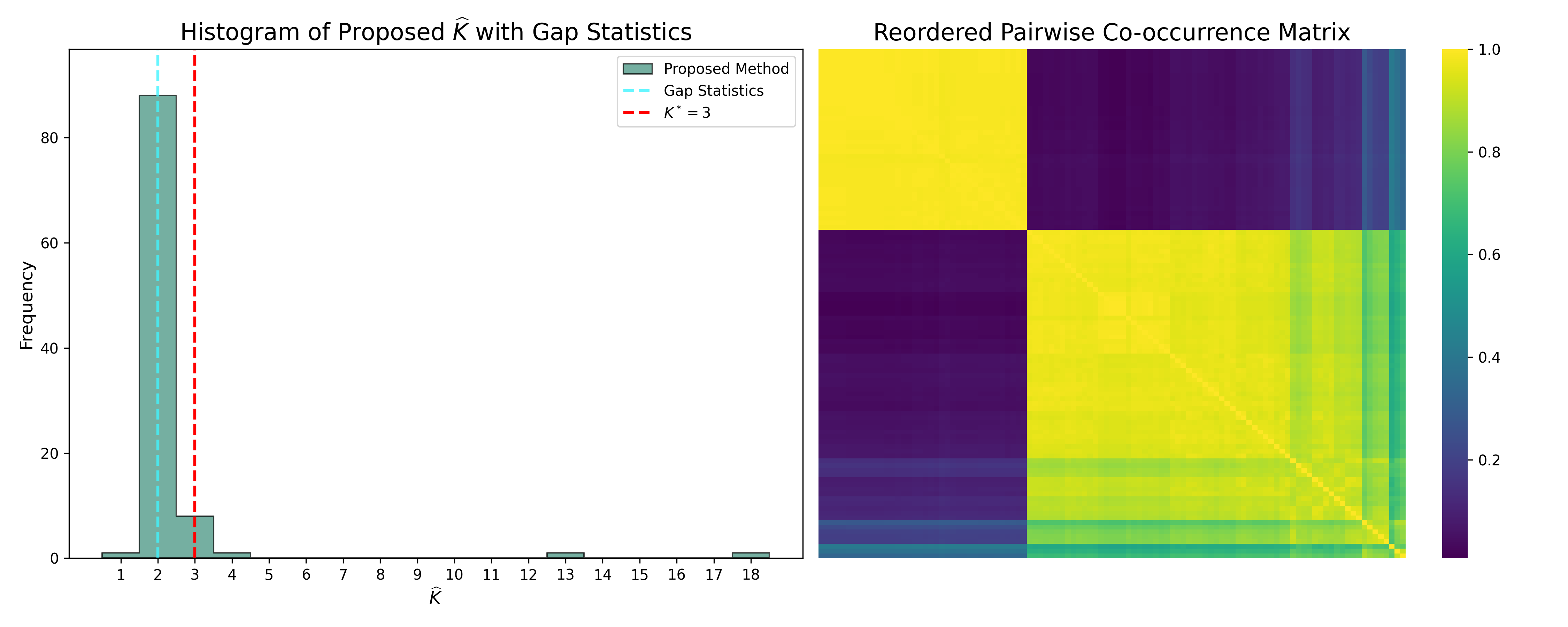}
    \caption{(Left) Histogram of the estimated number of clusters $\widehat{K}$. The blue dashed line shows $\widehat{K}=2$ given by the gap statistics. (Right) Heatmap displaying the pairwise co-occurrence matrix across 500 independent trials using $\widehat{K}=2$ and $\tau^*=0.10$.}
    \label{fig:penguin_hist}
\end{figure}

The left panel of Figure \ref{fig:penguin_hist} shows that both our proposed method and the gap statistic consistently select $\widehat{K} = 2$, where the red vertical line represents the true number of clusters, $K^* = 3$, and the blue vertical line indicates the number of clusters estimated by the gap statistic. The right panel displays the co-occurrence heatmap, which reveals two large, high-stability blocks. However, some uneven shading within the larger block suggests the presence of some within-cluster structure, consistent with the fact that the underlying data includes three species.
This observation aligns with the guarantee in Theorem \ref{thm:overest:control}, which ensures that our sequential testing procedure is unlikely to split true clusters, although it does not guarantee the recovery of all true clusters present in the data.

\section{Conclusion}
\label{sec:conclusion}

Validating clustering results is challenging because the same data cannot be used both to form the clusters and to test whether those clusters are genuine. 
In this paper, we address this challenge by developing tests that quantify the evidence against each merge in a hierarchical clustering algorithm. 
The resulting p-values are then used to determine when to stop the merging process and, in turn, to estimate the number of clusters. 

At the core of our approach is a novel randomized hierarchical clustering method. 
Randomization gives researchers flexible control to retain more information for inference or validation, without compromising the quality of selection.
In particular, the p-values produced by our testing procedure allow the same data to be used twice for both clustering and validation, resulting in more powerful tests while controlling the desired Type I error rate. 

Beyond the statistical advantages of using the same data for both clustering and validation, our proposed randomization scheme offers a practical strategy to construct tractable inference in hierarchical clustering.
It can be applied with any linkage without requiring the derivation of linkage-specific tests or the use of Monte Carlo approximations. 
This gives our approach an advantage over existing selective inferential methods, which can only explicitly characterize the conditioning set for certain linkages and dissimilarities based on Euclidean distances, but must rely on costly sampling-based approximations for others.

Finally, interpreting hierarchical clustering output relies on selecting an appropriate dendrogram level or cut height in order to extract and assess the resulting cluster structure. 
To this end, the p-values developed in our paper---with the appropriate conditional Type I error guarantees---are used to construct a sequential testing procedure that selects the number of clusters with a rigorous probabilistic guarantee against overestimation.

There are several promising directions for future work.
In this paper, we have derived exact p-values under normality.
To extend our approach to distribution-free settings, future work could investigate the construction of asymptotically valid p-values by using the transfer of conditional validity to other distributions or to non-parametric settings, similar to what has recently been done for randomized decision trees \citep{bakshi2025classification}.
Moreover, although our current approach focuses on hierarchical clustering algorithms, the potential of our randomization scheme to develop inference for other types of clustering methodology and for other types of cluster-based hypotheses remains to be explored.

\bibliography{references.bib}

\newpage
\appendix
\phantomsection\label{supplementary-material}
\bigskip

\begin{center}

{\large\bf APPENDIX}

\end{center}

\section{Proofs}
\subsection{Proof of Proposition \ref{prop:limit}} \label{proof:prop1}

We first present a preliminary result establishing that, at a given step $t$, the random merge made by our algorithm converges to the set of admissible merges in traditional clustering, conditional on the previous history of merges and the observed data.

\begin{lemma}
Suppose $(\tradmergecol{t-1}_o,\tradmerge{t}_o)\in \Omega_o^{*(t)}$ be a merge sequence from traditional hierarchical clustering applied to a fixed data matrix $X_o$.
Then, we have that
     \[
     \lim_{\tau^* \rightarrow 0} \bbP((\tradmergecol{t-1}_o,\merge{t})\in \Omega_o^{*(t)} |\mergecol{t-1} =\tradmergecol{t-1}_o,X=X_o ) =1.
     \] 
\label{lem:limit_t} 
\end{lemma}

\begin{proof}
By definition of traditional hierarchical clustering,
$$
d(\tradmerge{t}_o;X_o) < d(M; X_o),\quad \forall \quad M \notin \underset{M'\in \mathcal{M}(\tradmergecol{t-1}_o)}{\text{argmin}} d(M', X_o).
$$

Now, observe the following:
\begin{align*}
     \bbP(\merge{t}= M|\mergecol{t-1} =\tradmergecol{t-1}_o,X=X_o)  &= \dfrac{\exp \left(-\dfrac{d(M; X_o)}{\tau(\possiblemerges{t}_o,X_o)}\right)}{\displaystyle\sum_{M'\in \possiblemerges{t}_o}\exp \left(-\dfrac{d(M';X_o)}{\tau(\possiblemerges{t}_o,X_o)}\right)} \times \dfrac{\exp\left(\dfrac{d(\tradmerge{t}_o;X_o)}{\tau(\possiblemerges{t}_o,X_o)}\right)}{\exp\left(\dfrac{d(\tradmerge{t}_o;X_o)}{\tau(\possiblemerges{t}_o,X_o)}\right)}\\
        &=\dfrac{\exp\left(-\dfrac{1}{\tau(\possiblemerges{t}_o,X_o)} (d(M; X_o) - d(\tradmerge{t}_o;X_o))\right)}{\displaystyle\sum_{M'\in \possiblemerges{t}_o}\exp\left(-\dfrac{1}{\tau(\possiblemerges{t}_o,X_o)} (d(M'; X_o) - d(\tradmerge{t}_o;X_o))\right)}\\
        &= \frac{\exp\left(-\dfrac{|\possiblemerges{t}_o|}{\tau^*} \times \dfrac{(d(M; X_o) - d(\tradmerge{t}_o; X_o))}{\sum_{M'\in\possiblemerges{t}_o}d(M'; X_o)}\right)}{1+\displaystyle\sum_{M' \in \possiblemerges{t}_o\setminus \tradmerge{t}_o}\exp\left(-\dfrac{|\possiblemerges{t}_o|}{\tau^*} \times \dfrac{(d(M; X_o) - d(\tradmerge{t}_o; X_o))}{\sum_{M'\in\possiblemerges{t}_o}d(M'; X_o)}\right)}\\
        &< \exp\left(-\dfrac{|\possiblemerges{t}_o|}{\tau^*} \times \dfrac{(d(M; X_o) - d(\tradmerge{t}_o; X_o))}{\sum_{M'\in\possiblemerges{t}_o}d(M'; X_o)}\right).
\end{align*}
As $\tau^*\rightarrow 0$, the right-hand side of the above inequality tends to $0$ since $d( M;X_o) - d(\tradmerge{t}_o;X_o) > 0$. Hence, \[\bbP(\merge{t}= M|\mergecol{t-1} =\tradmergecol{t-1}_o,X=X_o) \rightarrow 0\] for all $M \notin \underset{M'\in \mathcal{M}(\tradmergecol{t-1}_o)}{\text{argmin}} d(M', X_o)$. Since the number of possible merges is finite, it follows that
\[
\lim_{\tau^* \to 0}
\bbP(\merge{t}\in \underset{M'\in \mathcal{M}(\tradmergecol{t-1}_o)}{\text{argmin}} d(M', X_o) \mid 
\mergecol{t-1}=\tradmergecol{t-1}_o, X=X_o)=1,
\]
which is equivalent to
$$
\lim_{\tau^* \rightarrow 0} \bbP((\tradmergecol{t-1}_o,\merge{t})\in \Omega_o^{*(t)} |\mergecol{t-1} =\tradmergecol{t-1}_o,X=X_o ) =1.
$$
\end{proof}

Now, we are ready to present to proof for Proposition \ref{prop:limit}.
\begin{proof}
We prove this result by induction. When $t=1$, we have that
\[\bbP(\mergecol{1}\in \Omega^{*(1)}_o| X=X_o) \rightarrow 1\]
as $\tau^* \rightarrow 0$, directly from Lemma~\ref{lem:limit_t}.

Now suppose that the claim holds up to step $t-1$, i.e.
\begin{equation}
\lim_{\tau^*\rightarrow 0 }\bbP(\mergecol{t-1} \in \Omega^{*(t-1)}_o|X=X_o)=1, 
    \label{eq:limit_t-1}
\end{equation}

At step $t$, by the law of total probability,
\begin{equation*}
    \begin{aligned}
    &\quad\bbP(\mergecol{t} \in \Omega^{*(t)}_o|X=X_o) \\
    &=
\sum_{\tradmergecol{t-1}_o\in \Omega_o^{*(t-1)}}
\bbP(
(\tradmergecol{t-1}_o,\merge t)\in \Omega_o^{*(t)},
\ \mergecol{t-1}=\tradmergecol{t-1}_o
\mid X=X_o
) \\
    &= \sum_{\tradmergecol{t-1}_o \in \Omega^{*(t-1)}_o} \bbP((\tradmergecol{t-1}_o,\merge{t})\in \Omega^{*(t)}_o \mid \mergecol{t-1} = \tradmergecol{t-1}_o, X=X_o)\times \bbP(\mergecol{t-1} = \tradmergecol{t-1}_o\mid X=X_o)
    \end{aligned}
\end{equation*}
By Lemma~\ref{lem:limit_t}, for each fixed $\tradmergecol{t-1}_o \in \Omega^{*(t-1)}_o$,
$$\lim_{\tau^*\rightarrow 0}\bbP((\tradmergecol{t-1}_o,\merge t)\in \Omega_o^{*(t)}
\mid \mergecol{t-1}=\tradmergecol{t-1}_o,\ X=X_o) = 1.$$
Therefore, it holds that
\begin{equation*}
\begin{aligned}
\lim_{\tau^* \rightarrow 0}\bbP(\mergecol{t} \in \Omega^{*(t)}_o|X=X_o)  &= \lim_{\tau^* \rightarrow 0}\sum_{\tradmergecol{t-1}_o\in \Omega^{*(t-1)}_o} \bbP(\mergecol{t-1} = \tradmergecol{t-1}_o\mid X=X_o)\\
&= \lim_{\tau^* \rightarrow 0}\bbP(\mergecol{t-1}\in \Omega^{*(t-1)}_o \mid X=X_o)=1,
\end{aligned}
\end{equation*}
where the second equality comes from the definition of $\Omega^{*(t-1)}_o$ in \eqref{eq:set_mergeseq} and the last equality comes from \eqref{eq:limit_t-1}. This completes the proof.
\end{proof}

\subsection{Proof of Lemma \ref{lem:recX}}\label{proof:lem2}

\begin{proof}

Note that the observed data matrix $X_o$ can be decomposed as follows:
\begin{align*}
    X_o &= B^{(t)}_oX_o + W^{(t)}_oX_o + (I_n-B^{(t)}_o-W^{(t)}_o)X_o\\
    &= \eta_o^{(t)} \|B^{(t)}_oX_o\|_F + \gamma_o^{(t)}\|W^{(t)}_oX_o\|_F + \Gamma_o^{(t)}\\
    &= \sqrt{\Delta_o^{(t)}}\cdot \left(\eta_o^{(t)} \sqrt{\frac{\|B^{(t)}_oX_o\|^2_F}{\|B^{(t)}_oX_o\|^2_F + \|W^{(t)}_oX_o\|^2_F}} +\gamma_o^{(t)} \sqrt{\frac{\|W^{(t)}_oX_o\|^2_F}{\|B^{(t)}_oX_o\|^2_F + \|W^{(t)}_oX_o\|^2_F}}\right) + \Gamma_o^{(t)}.
\end{align*}
The right-hand side of the above display further equals
\begin{align*}
    \sqrt{\Delta_o^{(t)}}\cdot \left(\eta_o^{(t)} \sqrt{\frac{R_o^{(t)}}{N_o^{(t)}-2+R_o^{(t)}}} +\gamma_o^{(t)} \sqrt{\frac{N_o^{(t)}-2}{N_o^{(t)}-2+R_o^{(t)}}}\right) + \Gamma_o^{(t)},
\end{align*}
by the definition of $R_o^{(t)} = (N_o^{(t)}-2)\cdot \dfrac{\|B^{(t)}_oX_o\|^2_F}{\|W^{(t)}_oX_o\|^2_F}$. 

By the definition of the function $X(r;\mathcal{A}^{(t)}_o)$, we see that upon setting $r = R_o^{(t)}$, its value coincides with the above expression for $X_o$, which establishes the claim. 
    
\end{proof}

\subsection{Proof of Lemma~\ref{lem:inf_target_dist}} \label{proof:lem:inf_target_dist}
\begin{proof}

First, we show that under the null hypothesis, the following independence between random variables holds:
\[R ^{(t)}\indep \Gamma^{(t)} \quad \text{and}\quad (\|B^{(t)}_o X\|_F,\|W^{(t)}_o X\|_F) \indep \left(\eta^{(t)},\gamma^{(t)}\right)\]

First, due to the isotropic covariance of $X$, we note that $B^{(t)}_oX$, $W^{(t)}_oX$, and $\Gamma^{(t)}=(I_n - B^{(t)}_o - W^{(t)}_o)X$ are mutually independent. 
Consequently, $R ^{(t)}\indep \Gamma^{(t)}$.

Moreover, under the null hypothesis, we have $\mathbb{E}[B^{(t)}_o X] = \mathbb{E}[W^{(t)}_o X] = 0$. 
By the properties of the Gaussian distribution with zero mean and isotropic covariance, it follows that $(\|B^{(t)}_o X\|_F,\|W^{(t)}_o X\|_F) \indep \left(\eta^{(t)},\gamma^{(t)}\right)$. Therefore, $R^{(t)}$ as a statistic that is a measurable function of $\|B^{(t)}_o X\|_F$, $\|W^{(t)}_o X\|_F$ remains independent of $\left(\eta^{(t)},\gamma^{(t)}\right)$.

Next, observe that the event $\mathcal{A}^{(t)} = \mathcal{A}^{(t)}_o$ can be written as 
\[\{\Delta^{(t)}=\Delta^{(t)}_o\}\cap \{\eta^{(t)}=\eta^{(t)}_o,\gamma^{(t)}=\gamma^{(t)}_o, \Gamma^{(t)} = \Gamma^{(t)}_o\},\]
where the second component involves only variables that are independent of $R^{(t)}$. We conclude that the distribution in our claim, 
$$
R^{(t)}  \Big\lvert \mathcal A^{(t)}=\mathcal A_o^{(t)}
$$ 
coincides with the distribution of $R^{(t)} \mid \Delta^{(t)} = \Delta^{(t)}_o$. 

Finally, since $\|B^{(t)}_o X\|^2_F$ and $\|W^{(t)}_o X\|^2_F$ are independent under the null $H_0^{(t)}$, with $\sigma^{-2} \cdot \|B^{(t)}_oX\|^2_F \sim \chi^2_p$ and $\sigma^{-2}\cdot\|W^{(t)}_o X\|^2_F \sim \chi^2_{(N_o^{(t)}-2)p}$, we have that $R^{(t)}$ is also independent of $\Delta^{(t)} = \|B^{(t)}_oX\|^2_F + \|W^{(t)}_o X\|^2_F$ and follows $F_{p,(N_o^{(t)}-2)p}$ distribution.  This follows from the properties of the $\chi^2$ and F distributions that for two independent random variables $Y\sim \chi^2_a$ and $Z\sim \chi^2_b$, $\frac{Y/a}{Z/b}$ is independent of their sum $Y+Z$, and follows a $F_{a,b}$ distribution. 

Since $R^{(t)}$, for fixed groups or clusters, is distributed as $F_{p,(N_o^{(t)}-2)p}$ under $H_0^{(t)}$, the claim follows.
    
\end{proof}

\subsection{Proof of Theorem~\ref{thm:formula_density}}\label{proof:thm1}
We first present a preliminary result that derives the probability of the observed merge sequence conditional on the test statistic and the auxiliary statistics.

\begin{lemma}\label{lem:sel_prob_product}
For $t\geq 1$, we have that
\[
\mathbb{P}\!\left(M^{(1)}=M_o^{(1)},\ldots,M^{(t)}=M_o^{(t)} \mid R^{(t)} = r,\; \mathcal{A}^{(t)} = \mathcal{A}^{(t)}_o\right)
= \prod_{s=1}^t \selprobtXu{s}{s}{r}
\]
where
$$\selprobtXu{s}{s}{r}= \mathbb{P}\left(M^{(s)}= M_o^{(s)} |\mergecol{s-1}=\mergecolobs{s-1},X= X(r; \mathcal{A}_o^{(t)})\right)$$
  is the sampling probability that $M^{(s)}= M_o^{(s)}$ given $X= X(r; \mathcal{A}_o^{(t)})$ and the merge history $\mergecol{s-1}=\mergecolobs{s-1}$, for $s\in \{1,2,\ldots, t\}$, as defined in \eqref{eq:sel_prob}.  
\end{lemma}
\begin{proof}
By Lemma~\ref{lem:recX}, we have that 
\begin{equation}
\begin{aligned}
&\bbP\left(M^{(1)}=M_o^{(1)},\ldots,M^{(t)}=M_o^{(t)}\; | \;  R^{(t)}=r, \mathcal A^{(t)}=\mathcal A_o^{(t)}\right)\\
&= \bbP\left(M^{(1)}=M_o^{(1)},\ldots,M^{(t)}=M_o^{(t)} \; \mid X = X(r; \mathcal{A}_o^{(t)})\right).
\end{aligned}
\label{correction:factor}
\end{equation}

By the chain rule of probability, one can further write the probability on the right-hand side of the above-stated equality as
\[\bbP(\merge{1} = \merge{1}_o\mid X=X(r; \mathcal{A}_o^{(t)}))\times\prod_{s=2}^t \bbP(\merge{s} = \mergeobs{s}\mid \mergecol{s-1} = \mergecolobs{s-1},X=X(r; \mathcal{A}_o^{(t)}))\]

From the definition of our sampling scheme in \eqref{eq:sel_prob}, it follows that the merge variable $M^{(s)}$, conditional on $\mergecol{s-1} = \mergecolobs{s-1}$ and $X = X(r; \mathcal{A}_o^{(t)})$, takes the value $M_o^{(s)}$ with probability $\selprobtXu{s}{s}{r}$, i.e.
$$\bbP(\merge{s} = \mergeobs{s}\mid \mergecol{s-1} = \mergecolobs{s-1},X=X(r; \mathcal{A}_o^{(t)}))=\selprobtXu{s}{s}{r}.$$
Hence, the probability in the claim factorizes as
\begin{align*}
&\mathbb P\!\left(\bigcap_{s=1}^t \{ \merge{s} = M_o^{(s)} \}\mid X = X(r; \mathcal{A}_o^{(t)})\right)=\prod_{s=1}^t \selprobtXu{s}{s}{r}.
\end{align*}
\end{proof}

Now, we are ready to present the proof for Theorem~\ref{thm:formula_density}.
\begin{proof}
Let the conditional density of 
$R^{(t)}  \Big\lvert \mathcal A^{(t)}=\mathcal A_o^{(t)}$
at $r$ be denoted by
$\ell_{R^{(t)}}(r \; \lvert \;  \mathcal A^{(t)}=\mathcal A_o^{(t)}).$
Then, using Bayes' rule, we observe that the density of the conditional distribution 
$R^{(t)}  \Big \lvert \{ \mergecol{t} = \mergecolobs{t}, \mathcal A^{(t)}=\mathcal A_o^{(t)} \},$
evaluated at $r$, is proportional to
\begin{equation}
\begin{aligned}
&\ell_{R^{(t)}}(r \; \lvert \;  \mathcal A^{(t)}=\mathcal A_o^{(t)}  )\cdot \bbP(\mergecol{t} = \mergecolobs{t}\; | \;  R^{(t)}=r,  \mathcal A^{(t)}=\mathcal A_o^{(t)} ).
\end{aligned}
\label{cond:density:corrected}
\end{equation}

Using Lemma \ref{lem:inf_target_dist}, under the null $H_0^{(t)}$, it follows that
\begin{equation}                        
\begin{aligned}
    \ell_{R^{(t)}}(r \; \lvert \;\mathcal A^{(t)}=\mathcal A_o^{(t)})&= \ell_{F_{p,(N_o^{(t)}-2)p}}(r).
\end{aligned}
\label{pre:selective:density}
\end{equation}
Furthermore, by Lemma~\ref{lem:sel_prob_product}, we have that 
\begin{equation}
\begin{aligned}
&\bbP\left(\mergecol{t} = \mergecolobs{t}\; | \;  R^{(t)}=r, \mathcal A^{(t)}=\mathcal A_o^{(t)}\right)\\
&= \bbP\left(\bigcap_{s=1}^t \{ \merge{s} = M_o^{(s)} \}\mid X = X(r; \mathcal{A}_o^{(t)})\right)= \prod_{s=1}^{t} \selprobtXu{s}{s}{r}.
\end{aligned}
\label{correction:factor}
\end{equation}
Combining \eqref{pre:selective:density} with \eqref{correction:factor}, we obtain that the expression in \eqref{cond:density:corrected} takes the value
$ \ell_{F_{p,(N_o^{(t)}-2)p}}(r) \times \prod_{s=1}^{t} \selprobtXu{s}{s}{r}.$
That is, the conditional density of interest is given by 
\[
\frac{ \ell_{F_{p,(N_o^{(t)}-2)p}}(r)\times \prod_{s=1}^{t} \selprobtXu{s}{s}{r}}{\int_0^\infty  \ell_{F_{p,(N_o^{(t)}-2)p}}(r') \times \prod_{s=1}^{t} \selprobtXu{s}{s}{r'}dr'}.
\]
As a result, the CDF of this distribution, evaluated at $r$, is as stated in the theorem.
\end{proof}

\subsection{Proof of Theorem \ref{thm:cond_p_value}}\label{proof:thm2}
\begin{proof}
Fixing some notation, let
\begin{equation*}
\begin{aligned}
\mathbb{P}_{\mergecolobs{t}}  (\mathcal{A}^{(t)}_o)
&=\notag\bbP_{H_0^{(t)}} \left(\pvalue(R^{(t)};\mathcal{A}^{(t)})\leq \alpha|\mergecol{t}=\mergecolobs{t},\mathcal A^{(t)}=\mathcal A_o^{(t)}\right)\\
&= \mathbb{E}_{H_0^{(t)}}\left[\mathbbm{1}\{\pvalue(R^{(t)};\mathcal{A}^{(t)})\leq \alpha\}|\mergecol{t}=\mergecolobs{t},\mathcal A^{(t)}=\mathcal A_o^{(t)}\right].
\end{aligned}
\end{equation*}

Then, by the probability integral transform, we obtain that
\begin{align*}
       \mathbb{P}_{\mergecolobs{t}}  (\mathcal{A}^{(t)}_o)=\alpha.
\end{align*}
Finally, we apply the law of iterated expectation to conclude that
\begin{align*}
    &\bbP_{H_0^{(t)}}\left(\pvalue(R^{(t)};\mathcal{A}^{(t)})\leq\alpha|\mergecol{t} = \mergecolobs{t}\right)=\mathbb{E}_{H_0^{(t)}}\left[\mathbb{P}_{\mergecolobs{t}}  (\mathcal{A}^{(t)}_o) |\mergecol{t}=\mergecolobs{t}\right]= \alpha.
\end{align*}
\end{proof}

\subsection{Proof of Proposition \ref{prop:pvals_conv}}\label{proof:prop2}
\begin{proof}
By Lemma \ref{lem:limit_t}, for each step $1\leq s\leq t$, it holds almost everywhere that, as $\tau^* \rightarrow 0$,
\[\selprobtXu{s}{s}{r} \rightarrow  \mathbbm{1}\{\mergeobs{s} \in \Psi^{*(s)}(r)\},\]
where $\Psi^{*(s)}(r) = \displaystyle\argmin_{M\in \possiblemerges{s}_o}d(M;X(r;\mathcal{A}^{(s)}_o))$.

Hence, the weight function
\[w_{\tau^*}(r)  := \prod_{s=1}^t \selprobtXu{s}{s}{r}\]
converges almost everywhere to $\prod_{s=1}^{t}\mathbbm{1}\{\mergeobs{s} \in \Psi^{*(s)}(r)\}$.
Furthermore, the weight functions are uniformly bounded, i.e.,
$0\leq w_{\tau^*}(r) \leq 1$.
Since the integrands in the numerators and denominators of $\mathbb{F}^{(t)}(R^{(t)}_o;\mathcal{A}^{(t)}_o, \mergecolobs{t})$, derived in Theorem \ref{thm:formula_density}, converge almost everywhere, respectively, to the integrands of \eqref{eqn:hard:trunc}, applying the dominated convergence theorem to the numerator and denominator of $\mathbb{F}^{(t)}(R^{(t)}_o;\mathcal{A}^{(t)}_o, \mergecolobs{t})$ yields the stated limit. 
\end{proof}

\subsection{Proof of Theorem \ref{thm:overest:control}} \label{proof:thm3}
To prove Theorem \ref{thm:overest:control}, we define $\mathcal{V} = \{\text{a true null } H_0^{(t)} \text{ is rejected}\} \cap \mathcal{T}$, the event that we falsely detect a difference between two clusters with the same means, and thus erroneously fail to merge them in the sequential procedure. We first present a supporting result showing that Algorithm \ref{alg:alpha_spending_stop} controls the family-wise error rate (FWER), which equals $\bbP(\mathcal{V})$.

\begin{lemma}
Suppose Algorithm $\ref{alg:alpha_spending_stop}$ is run with total significance level $\alpha$ and a pre-specified sequence $\{\alpha_1,\alpha_2,\dots,\alpha_{n-1}\}$ satisfying $\sum^{n-1}_{j=1} \alpha_j = \alpha$. 
Then, our sequential testing procedure, applied along the merges of the hierarchical clustering algorithm and stopping at the first rejection, controls the family-wise error rate at level $\alpha$, i.e.,
\[\bbP(\mathcal{V})\leq \alpha.\]
\label{lem:FWER}
\end{lemma}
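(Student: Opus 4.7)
The plan is to reduce the FWER bound to a sum of per-step conditional Type~I error controls, combined with the pathwise $\alpha$-spending budget and Theorem~\ref{thm:cond_p_value}. Because Algorithm~\ref{alg:alpha_spending_stop} halts at the first rejection, the event $\mathcal{V}$ is a disjoint union of single-step rejection events. Let $R_t$ denote the event that a test is conducted at step $t$, that no earlier rejection occurred, and that $\pvalue \leq \alpha^{(t)}$. On $\mathcal{T}$, the first cross-cluster merge occurs only at step $t^* = n - K^* + 1$, so the nulls $H_0^{(1)},\dots, H_0^{(t^*-1)}$ are all true. Hence $\mathcal{V} = \bigcup_{t=1}^{t^*-1}(R_t\cap \mathcal{T})$ and a union bound gives
\[
\bbP(\mathcal{V}) \leq \sum_{t=1}^{t^*-1} \bbP(R_t \cap \mathcal{T}).
\]

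Next I would handle each term by conditioning on $\mergecol{t}$. The subtlety is that $\mathcal{T}$ is only determined after step $t^*-1$, whereas Theorem~\ref{thm:cond_p_value} conditions on $\mergecol{t}$. To bridge this, introduce $\mathcal{T}_t$, the event that the first $t$ merges are within-cluster; this event is $\mergecol{t}$-measurable and satisfies $\mathcal{T}\subseteq \mathcal{T}_t$ for every $t\leq t^*-1$. By inspection of Algorithm~\ref{alg:alpha_spending_stop}, the level $\alpha^{(t)}$ is a deterministic function of the cluster sizes at step $t$ and of the previously consumed $\alpha$-values, and both are themselves functions of $\mergecol{t}$; thus $\alpha^{(t)}$ is $\mergecol{t}$-measurable. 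Dropping the ``no earlier rejection'' factor and using that on $\mathcal{T}_t$ the null $H_0^{(t)}$ is true, Theorem~\ref{thm:cond_p_value} applied with the (now deterministic) level $\alpha^{(t)}$ yields
\[
\bbP(R_t \cap \mathcal{T}) \leq \mathbb{E}\bigl[\mathbbm{1}_{\mathcal{T}_t}\,\bbP(\pvalue\leq \alpha^{(t)}\mid \mergecol{t})\bigr] = \mathbb{E}[\mathbbm{1}_{\mathcal{T}_t}\,\alpha^{(t)}].
\]

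Finally I would assemble the pieces using the $\alpha$-spending budget. Summing and dropping the indicator on $\mathcal{T}_t$,
\[
\bbP(\mathcal{V}) \leq \mathbb{E}\!\left[\sum_{t=1}^{n-1} \alpha^{(t)}\, \mathbbm{1}\{\text{a test is performed at step } t\}\right].
\]
Pathwise, the consumed levels are distinct elements of $\{\alpha_1,\dots,\alpha_{n-1}\}$, since each is removed from $\mathcal{S}$ as soon as it is used; hence the sum inside the expectation is bounded by $\sum_{j=1}^{n-1}\alpha_j = \alpha$, giving $\bbP(\mathcal{V})\leq \alpha$. The main obstacle, in my view, is the bookkeeping for the adaptive, data-dependent level $\alpha^{(t)}$: one must carefully verify both that $\alpha^{(t)}$ is $\mergecol{t}$-measurable, so that Theorem~\ref{thm:cond_p_value} applies with $\alpha^{(t)}$ treated as a constant under the conditional distribution, and that the budget constraint holds pathwise (not merely in expectation), which is built into the ``remove-from-$\mathcal{S}$'' rule of the algorithm.
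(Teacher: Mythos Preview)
Your proposal is correct and follows essentially the same route as the paper: decompose $\mathcal{V}$ into disjoint single-step rejection events, condition on $\mergecol{t}$, invoke Theorem~\ref{thm:cond_p_value} with the $\mergecol{t}$-measurable level $\alpha^{(t)}$, and finish with the pathwise $\alpha$-budget. Your version is in fact a bit more careful than the paper's own proof: introducing the $\mergecol{t}$-measurable truncations $\mathcal{T}_t\supseteq\mathcal{T}$ cleanly resolves the mismatch between $\mathcal{T}$ (which depends on all merges up to $t^*-1$) and the step-$t$ conditioning, and your explicit use of the remove-from-$\mathcal{S}$ rule justifies the pathwise bound $\sum_t \alpha^{(t)}\le\alpha$, steps the paper leaves somewhat implicit.
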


\begin{proof}
For each step $t$, consider the event
\begin{equation*}
    \begin{gathered}
     \mathcal{R}^{(t)} =\{\text{no rejection before step $t$} \}\cap \{H_0^{(t)} \text{ is true}\} \cap \{\widehat{p}^{(t)}\leq \alpha^{(t)}\},
    \end{gathered}
\end{equation*}
where $\alpha^{(t)}:=\alpha^{(t)}(\mergecol{t})$ is the assigned sequence in Algorithm \ref{alg:alpha_spending_stop}, that is allowed to depend on the clusters merged up to step $t$, particularly on the sizes of the clusters merged during step $t$.

Observe that
\begin{align*}
    \bbP(\mathcal{V}) = \bbP\left(\bigcup_{t=1}^{t^*} \mathcal{R}^{(t)}\right) &= \sum_{t=1}^{t^*}\bbP(\mathcal{R}^{(t)}) \\
    &\leq \sum_{t=1}^{t^*} \mathbb{E}\left[  \bbP_{H_0^{(t)}}\left(\widehat{p}^{(t)}\leq \alpha^{(t)}(\mergecol{t}) | \mergecol{t}=\mergecolobs{t}\right)\right] \\
    &= \sum_{t=1}^{t^*}\mathbb{E}\left[ \alpha^{(t)}(\mergecolobs{t})\right]\\
    &= \mathbb{E}\left[ \sum_{t=1}^{t^*}\alpha^{(t)}(\mergecolobs{t})\right]
    = \sum_{t=1}^{t^*}\alpha^{(t)}(\mergecolobs{t}) \leq \alpha.
\end{align*}
Here, the first equality holds because Algorithm \ref{alg:alpha_spending_stop} stops at the first rejection, which means the events $\mathcal{R}^{(t)}$ are disjoint.
The second inequality is satisfied trivially because $\{H_0^{(t)} \text{ is true}\} \cap \{\widehat{p}^{(t)}\leq \alpha^{(t)}(\mergecol{t})\} \subseteq \mathcal{R}^{(t)}$.
The third equality is due to the conditional validity of the p-value $\widehat{p}^{(t)}$, as established in Theorem \ref{thm:cond_p_value}, and once we condition on the sequence of merges, the adaptively-determined level of significance $\alpha^{(t)}(\mergecolobs{t})$ can be treated as fixed, i.e.,
\[\bbP(\widehat{p}^{(t)}\leq \alpha^{(t)}(\mergecol{t})|\mergecol{t}=\mergecolobs{t}) = \alpha^{(t)}(\mergecolobs{t}).
\]

\end{proof}

We are now ready to present the proof of Theorem \ref{thm:overest:control} using Lemma \ref{lem:FWER}.

\begin{proof}[Theorem \ref{thm:overest:control}]
The proof of this result follows from observing that
$$
\bbP\left(\{\widehat{K}> K^*\} \cap \mathcal{T} \right) \leq \bbP(\mathcal{V}).
$$
By Lemma \ref{lem:FWER}, we have $\bbP(\mathcal{V}) \leq \alpha$, which holds due to the construction of our $\alpha$-spending procedure.
Therefore, the claim follows. 
\end{proof}

\section{Extension: constructing p-value under non-spherical covariance}
\label{subsec:non_spherical}

The conditional p-values developed in Section \ref{sec:pvals} are derived under the model in \eqref{eq:model}, where $Cov(X_i) = \sigma^2 I_p$ for an unknown $\sigma$. In this section, we describe our construction of p-values under the null hypothesis for the test considered by \cite{gao2024selective} with a known non-spherical covariance matrix $\Sigma$, using our randomized approach. 
Specifically, we consider the model 
\begin{equation}
       X \sim \mathcal{MN}_{n\times p}(\mu,I_n,\Sigma),
\end{equation}
where $\Sigma\in \mathbb{R}^{p\times p}$ is a known positive definite matrix. 
This includes the special case where $\Sigma = \sigma^2 I_p$ for known $\sigma^2$.

\begin{remark}
In practice, $\Sigma$ is not known and must be estimated from the observed data.
Formally, we expect that an asymptotic justification can show that this class of tests extends to consistent plug-in estimates of $\Sigma$, thereby allowing their use even when $\Sigma$ is estimated. 
For an example of this type of justification, see \cite{bakshi2025classification}.
A detailed investigation of this is left to future work.
\end{remark}

For the observed realization of the data matrix $X  = X_o$ and an observed merge $\{\merge{t}=\mergeobs{t}\}$, where $\mergeobs{t} = \left(\wincluobs{t}{1},\wincluobs{t}{2}\right)$, we consider the null hypothesis in \cite{gao2024selective}, defined as
\begin{equation}
\label{eq:hypothesis_chi}
    H_0^{(t)}: \bar{\mu}_{\wincluobs{t}{1}}=\bar{\mu}_{\wincluobs{t}{2}},
\end{equation}
where $\bar{\mu}_C = \frac{1}{|C|}\sum_{i\in C}\mu_i$ is defined as the mean of the population means in cluster $C$.

\subsection{Test statistics and preliminaries}

As done earlier, we first define the test statistic and consider its distribution conditional on appropriately identified auxiliary statistics under the null hypothesis, assuming the clusters are predefined.

To test \eqref{eq:hypothesis_chi}, we consider the test statistic
\begin{equation}\label{eq:test_stat_chi}
    U^{(t)}\left(X;\wincluobs{t}{1},\wincluobs{t}{2}\right) := \big\|\Sigma^{-\frac{1}{2}}X^T\nu_{\mergeobs{t}}\big\|_2,
\end{equation}
where $\nu_{\mergeobs{t}}$ is defined in \eqref{eq:nuM}. This statistic quantifies the separation between the means of the merged clusters, but in terms of the whitened data matrix $X$. 
For simplicity, we will denote $U^{(t)}\left(X;\wincluobs{t}{1},\wincluobs{t}{2}\right)$ simply as $U^{(t)}$ hereafter.

Define the auxiliary statistics for this test by 
$\mathcal{A}^{(t)}_{\Sigma} = \{\xi^{(t)},\pi^{(t)}\}$,
where 
\[\xi^{(t)} = dir(\Sigma^{-\frac{1}{2}}X^T\nu_{\mergeobs{t}})^T, \quad\pi^{(t)} = \left(I_n - \dfrac{\nu_{\mergeobs{t}}\nu_{\mergeobs{t}}^T}{\|\nu_{\mergeobs{t}}\|^2_2}\right)X.\]
Denote the observed value of 
$\mathcal{A}^{(t)}_{\Sigma}$ by 
\begin{equation}\label{eq:auxi_stats_chi}
    \mathcal{A}_{\Sigma,o}^{(t)}=  \{\xi_o^{(t)},\pi_o^{(t)}\}.
\end{equation}

Lemma \ref{lem:decompXSigma} provides a decomposition of $X$ in terms of the test statistic $U^{(t)}$ and the auxiliary statistics $\mathcal{A}^{(t)}_{\Sigma}$.
\begin{lemma}
For some $u \in \mathbb{R}^+$, defining the function 
\[X\left(u;\mathcal{A}^{(t)}_{\Sigma,o}\right) = u\dfrac{\nu_{\mergeobs{t}}}{\|\nu_{\mergeobs{t}}\|^2_2} \xi_o^{(t)}\Sigma^{\frac{1}{2}} + \pi_o^{(t)},\]
where $\nu_{\mergeobs{t}}$ is defined in \eqref{eq:nuM} and $\mathcal{A}^{(t)}_{\Sigma,o}$ is defined in \eqref{eq:auxi_stats_chi}.
It holds that
\[X(U^{(t)}_o;\mathcal{A}^{(t)}_{\Sigma,o}) = X_o,\]
where $U^{(t)}_o$ is the observed value of the test statistics defined in \eqref{eq:test_stat_chi} and $X_o$ is the observed value of the data matrix $X$.
\label{lem:decompXSigma}
\end{lemma}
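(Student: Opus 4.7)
The plan is to verify this identity by a direct orthogonal decomposition of $X_o$ along the rank-one subspace spanned by $\nu_{\mergeobs{t}}$ and its complement, and then to rewrite the rank-one part in terms of $U_o^{(t)}$ and $\xi_o^{(t)}$.

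First I would note that, because $\pi_o^{(t)} = (I_n - \nu_{\mergeobs{t}}\nu_{\mergeobs{t}}^T/\|\nu_{\mergeobs{t}}\|_2^2)X_o$ by definition, we have the trivial identity
\[
X_o \;=\; \pi_o^{(t)} + \frac{\nu_{\mergeobs{t}}\nu_{\mergeobs{t}}^T}{\|\nu_{\mergeobs{t}}\|_2^2}\,X_o,
\]
which expresses $X_o$ as the sum of its projection onto the complement of $\nu_{\mergeobs{t}}$ and its rank-one component in the direction of $\nu_{\mergeobs{t}}$. It therefore suffices to show that the rank-one residual equals $U_o^{(t)}\,\dfrac{\nu_{\mergeobs{t}}}{\|\nu_{\mergeobs{t}}\|_2^2}\,\xi_o^{(t)}\Sigma^{1/2}$.

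Next I would simplify the rank-one term. By the definition of $\xi^{(t)}$ in the setup above the lemma, at the observed data we have
\[
\xi_o^{(t)} \;=\; \frac{(\Sigma^{-1/2} X_o^T \nu_{\mergeobs{t}})^T}{\|\Sigma^{-1/2} X_o^T \nu_{\mergeobs{t}}\|_2} \;=\; \frac{\nu_{\mergeobs{t}}^T X_o \Sigma^{-1/2}}{U_o^{(t)}},
\]
using $U_o^{(t)} = \|\Sigma^{-1/2} X_o^T \nu_{\mergeobs{t}}\|_2$ from \eqref{eq:test_stat_chi}. Multiplying both sides by $U_o^{(t)}\Sigma^{1/2}$ gives $U_o^{(t)}\,\xi_o^{(t)}\Sigma^{1/2} = \nu_{\mergeobs{t}}^T X_o$. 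Substituting this into the rank-one residual yields
\[
\frac{\nu_{\mergeobs{t}}\nu_{\mergeobs{t}}^T}{\|\nu_{\mergeobs{t}}\|_2^2}\,X_o \;=\; U_o^{(t)}\,\frac{\nu_{\mergeobs{t}}}{\|\nu_{\mergeobs{t}}\|_2^2}\,\xi_o^{(t)}\Sigma^{1/2},
\]
so that adding $\pi_o^{(t)}$ to both sides recovers exactly $X(U_o^{(t)};\mathcal{A}_{\Sigma,o}^{(t)}) = X_o$, as claimed.

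There is no real obstacle here; the argument is a short verification that only requires keeping careful track of dimensions (that $\xi_o^{(t)}$ is a $1\times p$ row vector while $\nu_{\mergeobs{t}}/\|\nu_{\mergeobs{t}}\|_2^2$ is an $n$-vector, so their outer product is the correct $n\times p$ rank-one matrix) and invoking the fact that $U_o^{(t)}$ and $\xi_o^{(t)}$ jointly encode the magnitude and direction of $\Sigma^{-1/2} X_o^T \nu_{\mergeobs{t}}$. The only point worth flagging in the write-up is that the roles of the two pieces of $\mathcal{A}_{\Sigma,o}^{(t)}$ mirror those in Lemma \ref{lem:recX}: $\pi_o^{(t)}$ plays the role of the orthogonal residual (as $\Gamma_o^{(t)}$ did earlier), while $(U_o^{(t)},\xi_o^{(t)})$ together recover the rank-one component along $\nu_{\mergeobs{t}}$ (analogous to how $(R_o^{(t)},\eta_o^{(t)},\gamma_o^{(t)},\Delta_o^{(t)})$ recovered the $B_o^{(t)} + W_o^{(t)}$ part of $X_o$).
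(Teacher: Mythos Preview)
Your proposal is correct and follows essentially the same route as the paper: both decompose $X_o$ into its rank-one projection along $\nu_{\mergeobs{t}}$ and the orthogonal residual $\pi_o^{(t)}$, and then rewrite the rank-one piece using $U_o^{(t)}\,\xi_o^{(t)}\Sigma^{1/2} = \nu_{\mergeobs{t}}^T X_o$. The only cosmetic difference is that the paper factors out the magnitude and direction in one line, whereas you first solve for $\xi_o^{(t)}$ and then multiply through.
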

\begin{proof}
Note that the observed data matrix $X_o$ can be decomposed as follows:
\begin{align*}
    X_o &=  \frac{\nu_{\mergeobs{t}}}{\|\nu_{\mergeobs{t}}\|^2_2}\nu_{\mergeobs{t}}^TX_o +\left(I_n - \frac{\nu_{\mergeobs{t}}\nu_{\mergeobs{t}}^T}{\|\nu_{\mergeobs{t}}\|_2^2}\right)X_o\\
    & = \|\Sigma^{-\frac{1}{2}}X_o^T\nu_{\mergeobs{t}}\|_2 \frac{\nu_{\mergeobs{t}}}{\|\nu_{\mergeobs{t}}\|^2_2}dir(\Sigma^{-\frac{1}{2}}X_o^T\nu_{\mergeobs{t}})^T\Sigma^{\frac{1}{2}}  +\left(I_n - \frac{\nu_{\mergeobs{t}}\nu_{\mergeobs{t}}^T}{\|\nu_{\mergeobs{t}}\|_2^2}\right)X_o \\
    & = U^{(t)}_o\frac{\nu_{\mergeobs{t}}}{\|\nu_{\mergeobs{t}}\|^2_2}\xi^{(t)}_o\Sigma^{\frac{1}{2}}+\pi_o^{(t)},
\end{align*}
where the second equality follows from expressing the projection term $\frac{\nu_{\mergeobs{t}}}{\|\nu_{\mergeobs{t}}\|^2_2}\nu_{\mergeobs{t}}^TX_o$ along $dir(\Sigma^{-\frac{1}{2}}X^T_o\nu_{\mergeobs{t}})$, and the last equality follows from the definition of $U^{(t)}_o$ defined is \eqref{eq:test_stat_chi} and $\mathcal{A}^{(t)}_{\Sigma,o}$ defined in \eqref{eq:auxi_stats_chi}.

By the definition of the function $X(u;\mathcal{A}^{(t)}_{\Sigma})$, we see that upon setting $u = U_o^{(t)}$, its value coincides with the above expression for $X_o$, which establishes the claim. 
\end{proof}
Assume that the merged clusters $\mergeobs{t} = \left(\wincluobs{t}{1}, \wincluobs{t}{2}\right)$ are predefined.
Then, in analogy with Lemma \ref{lem:inf_target_dist}, the following result establishes that, under the null in \eqref{eq:hypothesis_chi}, the test statistic $U^{(t)}$ follows a $\chi_p$ distribution when conditioned on the auxiliary statistics.

\begin{lemma}\label{lem:inf_target_dist_chi}
    Consider the null hypothesis in \eqref{eq:hypothesis_chi}, assuming that the clusters $(\wincluobs{t}{1},\wincluobs{t}{2})$ are predefined. 
    Then, 
    $$
    U^{(t)}|\mathcal{A}^{(t)}_{\Sigma}= \mathcal{A}_{\Sigma,o}^{(t)}\sim\|\nu_{\mergeobs{t}}\|_2\cdot\chi_p.
    $$
\end{lemma}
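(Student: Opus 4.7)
The plan is to exploit the matrix normal structure of $X$ to show that the test statistic $U^{(t)}$ is independent of the auxiliary statistics $\mathcal{A}^{(t)}_{\Sigma} = (\xi^{(t)}, \pi^{(t)})$ under the null hypothesis, and then identify the marginal distribution of $U^{(t)}$, which will equal the conditional distribution thanks to independence.

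First, I would establish the unconditional distribution of $U^{(t)}$. Since $X \sim \mathcal{MN}_{n \times p}(\mu, I_n, \Sigma)$, the contrast $X^T \nu_{\mergeobs{t}}$ is multivariate normal with mean $\mu^T \nu_{\mergeobs{t}} = \bar{\mu}_{\wincluobs{t}{1}} - \bar{\mu}_{\wincluobs{t}{2}}$ and covariance $\|\nu_{\mergeobs{t}}\|_2^2 \cdot \Sigma$. Under the null hypothesis \eqref{eq:hypothesis_chi} the mean vanishes, so $\Sigma^{-1/2} X^T \nu_{\mergeobs{t}} \sim \mathcal{N}_p(0, \|\nu_{\mergeobs{t}}\|_2^2 I_p)$. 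Taking the Euclidean norm gives $U^{(t)} \sim \|\nu_{\mergeobs{t}}\|_2 \cdot \chi_p$ marginally.

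Next, I would argue independence of $U^{(t)}$ from $\mathcal{A}^{(t)}_{\Sigma}$ in two steps. Let $P_\nu = \nu_{\mergeobs{t}} \nu_{\mergeobs{t}}^T / \|\nu_{\mergeobs{t}}\|_2^2$. The standard matrix normal identity $\mathrm{Cov}([AX]_{ij}, [BX]_{k\ell}) = [AB^T]_{ik}\Sigma_{j\ell}$ applied with $A = P_\nu$, $B = I_n - P_\nu$ yields $AB^T = 0$, so $P_\nu X$ and $(I_n - P_\nu) X = \pi^{(t)}$ are jointly Gaussian with zero cross-covariance, hence independent. Because both $U^{(t)}$ and $\xi^{(t)}$ are functions of $X^T \nu_{\mergeobs{t}} = (P_\nu X)^T \nu_{\mergeobs{t}}$, this already establishes independence of $(U^{(t)}, \xi^{(t)})$ from $\pi^{(t)}$. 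To separate $U^{(t)}$ from $\xi^{(t)}$, I would use spherical symmetry: under the null, $Z := \Sigma^{-1/2} X^T \nu_{\mergeobs{t}} / \|\nu_{\mergeobs{t}}\|_2 \sim \mathcal{N}_p(0, I_p)$, so its magnitude $\|Z\|_2$ (which equals $U^{(t)}/\|\nu_{\mergeobs{t}}\|_2$) and direction $Z/\|Z\|_2$ (which equals $\xi^{(t)T}$ up to a fixed invertible transformation) are independent — a classical fact for spherically symmetric distributions.

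Combining these two independences shows $U^{(t)} \indep (\xi^{(t)}, \pi^{(t)}) = \mathcal{A}^{(t)}_{\Sigma}$, so the conditional distribution coincides with the marginal, giving $U^{(t)} \mid \mathcal{A}^{(t)}_{\Sigma} = \mathcal{A}^{(t)}_{\Sigma,o} \sim \|\nu_{\mergeobs{t}}\|_2 \cdot \chi_p$, as claimed. I do not anticipate a real obstacle here: the one place that warrants care is ensuring the argument that $U^{(t)}$ is independent of $\xi^{(t)}$ does not appeal to a bare "functions of the same random vector" intuition, which is why invoking the rotational invariance of the standard Gaussian for $\Sigma^{-1/2} X^T \nu_{\mergeobs{t}}$ is essential, and why the hypothesis that the clusters $(\wincluobs{t}{1}, \wincluobs{t}{2})$ are treated as predefined (so that $\nu_{\mergeobs{t}}$ is non-random) is used.
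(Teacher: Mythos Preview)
Your proposal is correct and follows essentially the same route as the paper: establish that $\Sigma^{-1/2}X^T\nu_{\mergeobs{t}}$ is isotropic Gaussian under the null, use the orthogonal projection $P_\nu$ versus $I_n-P_\nu$ to get independence from $\pi^{(t)}$, and invoke the magnitude--direction independence of a spherical Gaussian for $U^{(t)}\indep\xi^{(t)}$. If anything, your version is slightly more careful in explicitly noting that $(U^{(t)},\xi^{(t)})$ is jointly independent of $\pi^{(t)}$ before combining with $U^{(t)}\indep\xi^{(t)}$, whereas the paper merges the two pairwise independences directly.
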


\begin{proof}
First, we show that under the null hypothesis $H_0^{(t)}: \bar{\mu}_{\wincluobs{t}{1}} = \bar{\mu}_{\wincluobs{t}{2}}$, it holds that
\[U^{(t)}\indep \pi^{(t)}, \text{ and } U^{(t)}\indep \xi^{(t)}.\]

To this end, since $\left(I_n - \dfrac{\nu_{\mergeobs{t}}\nu_{\mergeobs{t}}^T}{|\nu_{\mergeobs{t}}|^2_2}\right)$ is an orthogonal projection matrix onto the subspace orthogonal to $\nu_{\mergeobs{t}}$, it follows that $U^{(t)} \indep \pi^{(t)}$.

To show independence between $U^{(t)}$ and $ \xi^{(t)}$, observe that $X\Sigma^{-\frac{1}{2}}\sim \mathcal{MN}(\mu\Sigma^{-\frac{1}{2}},I_n, I_p)$, and therefore, 
\[\frac{\Sigma^{-\frac{1}{2}}X^T\nu_{\mergeobs{t}}}{\|\nu_{\mergeobs{t}}\|_2}\sim \mathcal{N}(0,I_p).\]
By the standard property of the isotropic normal distribution that the direction and magnitude are independent, we conclude that $U^{(t)} \indep \xi^{(t)}$.

Combining both independence results, we conclude that the conditional distribution $U^{(t)}|\mathcal{A}^{(t)}_{\Sigma}=\mathcal{A}^{(t)}_{\Sigma,o}$ is equal to the unconditional distribution of the random variable $U^{(t)}$. Lastly, under the null, it is straightforward to verify that \[\|\Sigma^{-\frac{1}{2}}X^T\nu_{\mergeobs{t}}\|^2_2\sim \|\nu_{\mergeobs{t}}\|^2_2\cdot\chi^2_p,\] which yields the desired result.
\end{proof}

\subsection{Computing the p-value}

To compute the conditional p-value for the null in \eqref{eq:hypothesis_chi}, we characterize the conditional distribution of 
\begin{equation}\label{eq:target_U}
    U^{(t)}\Big\lvert \left\{\mergecol{t}=\mergecolobs{t},\mathcal{A}_{\Sigma}=\mathcal{A}^{(t)}_{\Sigma,o}\right\}
\end{equation}

\begin{theorem}\label{thm:formula_density_chi}
    Under $H^{(t)}_0$ in \eqref{eq:hypothesis_chi}, the cumulative distribution function of the conditional distribution of the test statistics $U^{(t)}$ in \eqref{eq:target_U}, evaluated at $u$, is given by 
    \[\mathbb{F}^{(t)}(u;\mathcal{A}^{(t)}_{\Sigma,o},\mergecolobs{t}) =
    \frac{\int_0^{u}\ell_{\|\nu_{\mergeobs{t}}\|_2\cdot \chi_p}(u')\times \prod_{s=1}^{t} \selprobtXuchi{s}{s}{u'}du'}{\int_0^\infty\ell_{\|\nu_{\mergeobs{t}}\|_2\cdot \chi_p}(u')\times \prod_{s=1}^{t}\selprobtXuchi{s}{s}{u'}du'},\]
    where $\ell_{\|\nu_{\mergeobs{t}}\|_2\cdot \chi_p}(u)$ denotes the density of a $\|\nu_{\mergeobs{t}}\|_2 \cdot \chi_p$ random variable and $\selprobtXuchi{s}{s}{u'}$ is the probability that $M^{(s)}= M_o^{(s)}$ given the merge history $\mergecol{s-1} = \mergecolobs{s-1} $ and $X= X(u'; \mathcal{A}_o^{(t)})$, for $s\in \{1,2,\ldots, t\}$, as defined in \eqref{eq:sel_prob}.
\end{theorem}

\begin{proof}
    Denote the conditional density of 
    \[U^{(t)}|\mathcal{A}^{(t)}_{\Sigma} =\mathcal{A}^{(t)}_{\Sigma,o}\] evaluated at $u$ by $\ell_{U^{(t)}}\left(u|\mathcal{A}^{(t)}_{\Sigma} =\mathcal{A}^{(t)}_{\Sigma,o}\right)$. Then using Bayes' rule, we observe that the density of the conditional distribution 
    $U^{(t)}\big\lvert \{\mergecol{t}=\mergecolobs{t}, \mathcal{A}^{(t)}_{\Sigma} =\mathcal{A}^{(t)}_{\Sigma,o}\}$
    at $u$, is proportional to 
\begin{equation}
\ell_{U^{(t)}}(u \; \lvert \;\mathcal{A}^{(t)}_{\Sigma} =\mathcal{A}^{(t)}_{\Sigma,o})\times \bbP(\mergecol{t} = \mergecolobs{t}\; | \;  U^{(t)} = u,\mathcal{A}^{(t)}_{\Sigma} =\mathcal{A}^{(t)}_{\Sigma,o}).
\label{cond:density:corrected_chi}
\end{equation}

Using Lemma \ref{lem:inf_target_dist_chi}, under the null $H_0^{(t)}$, it follows that
\begin{equation}                        
\begin{aligned}
    \ell_{U^{(t)}}\left(u \; \lvert \;\mathcal{A}^{(t)}_\Sigma = \mathcal{A}^{(t)}_{\Sigma,o}\right)&= \ell_{\|\nu_{\mergeobs{t}}\|_2\cdot \chi_p}(u).
\end{aligned}
\label{pre:selective:density_chi}
\end{equation}
Furthermore, by Lemma \ref{lem:sel_prob_product}, we have that
\begin{equation}
\begin{aligned}
&\bbP\left(\mergecol{t} = \mergecolobs{t}\; | \; U^{(t)}=u, \mathcal{A}^{(t)}_\Sigma = \mathcal{A}^{(t)}_{\Sigma,o}\right)\\
&= \bbP\left(\bigcap_{s=1}^t\{\merge{s} = \mergeobs{s}\}|X = X(u; \mathcal{A}_{\Sigma,o}^{(t)})\right)= \prod_{s=1}^{t} \selprobtXuchi{s}{s}{u}.
\end{aligned}
\label{correction:factor_chi}
\end{equation}
Combining \eqref{pre:selective:density_chi} with \eqref{correction:factor_chi}, we obtain that the expression in \eqref{cond:density:corrected_chi} takes the value
$$ \ell_{\|\nu_{\mergeobs{t}}\|_2\cdot \chi_p}(u) \times \prod_{s=1}^{t} \selprobtXuchi{s}{s}{u}.$$
That is, the conditional density of interest is given by 
\[
\frac{ \ell_{\|\nu_{\mergeobs{t}}\|_2\cdot \chi_p}(u)\times \prod_{s=1}^{t} \selprobtXuchi{s}{s}{u}}{\int_0^\infty  \ell_{\|\nu_{\mergeobs{t}}\|_2\cdot \chi_p}(u') \times \prod_{s=1}^{t} \selprobtXuchi{s}{s}{u'}du'}.
\]
As a result, the CDF of this distribution, evaluated at $u$, is as stated in the theorem.
\end{proof}

We then define the conditional p-value for this test by $\pvalue_\Sigma =\pvalue_\Sigma(U^{(t)};\mathcal{A}^{(t)}_{\Sigma})$, which is a function of the test statistics and the auxiliary statistics. Let $\pvalue_{\Sigma,o} = \pvalue_{\Sigma}(U^{(t)}_o;\mathcal{A}^{(t)}_{\Sigma,o})$, where 
\[\pvalue_{\Sigma,o} = \mathbb{P}_{H_0^{(t)}}\left(U^{(t)} \geq U^{(t)}_o\big\lvert \mergecol{t} = \mergecolobs{t}, \mathcal{A}^{(t)}_\Sigma = \mathcal{A}^{(t)}_{\Sigma,o} \right),\]
denotes its value computed on the observed data $X_o$, where $U^{(t)}_o$ is the observed value of the test statistics $U^{(t)}$.
\begin{corollary}
It holds that $\pvalue_{\Sigma,o} = 1-\mathbb{F}^{(t)}(U^{(t)}_o;\mathcal{A}^{(t)}_{\Sigma,o},\mergecolobs{t})$.
\label{cor:pval_chi}
\end{corollary}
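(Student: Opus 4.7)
The plan is to observe that this corollary is a direct translation of the definition of the p-value into the CDF notation, mirroring the argument used in Corollary \ref{cor:pval}. First I would recall that by definition
\[
\pvalue_{\Sigma,o} = \mathbb{P}_{H_0^{(t)}}\!\left(U^{(t)} \geq U^{(t)}_o \; \Big\lvert \; \left\{\mergecol{t} = \mergecolobs{t},\ \mathcal{A}^{(t)}_{\Sigma} = \mathcal{A}^{(t)}_{\Sigma,o}\right\}\right),
\]
so the task is to rewrite this right-tail probability as $1 - \mathbb{F}^{(t)}(U^{(t)}_o; \mathcal{A}^{(t)}_{\Sigma,o}, \mergecolobs{t})$.

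Next I would invoke Theorem \ref{thm:formula_density_chi}, which gives the CDF of the conditional distribution of $U^{(t)}$ given $\{\mergecol{t} = \mergecolobs{t},\ \mathcal{A}^{(t)}_{\Sigma} = \mathcal{A}^{(t)}_{\Sigma,o}\}$ in closed form as a ratio of one-dimensional integrals. The conditional density is proportional to $\ell_{\|\nu_{\mergeobs{t}}\|_2\cdot \chi_p}(u)\times \prod_{s=1}^{t} \selprobtXuchi{s}{s}{u}$, which is continuous in $u$ on $(0,\infty)$. Consequently the conditional distribution is absolutely continuous and assigns no mass to the singleton $\{U^{(t)}_o\}$, so the events $\{U^{(t)} \geq U^{(t)}_o\}$ and $\{U^{(t)} > U^{(t)}_o\}$ are equivalent under this conditional law.

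The final step is the short calculation
\[
\mathbb{P}_{H_0^{(t)}}\!\left(U^{(t)} \geq U^{(t)}_o \; \big\lvert \; \cdot \right) = 1 - \mathbb{P}_{H_0^{(t)}}\!\left(U^{(t)} < U^{(t)}_o \; \big\lvert \; \cdot \right) = 1 - \mathbb{F}^{(t)}(U^{(t)}_o; \mathcal{A}^{(t)}_{\Sigma,o}, \mergecolobs{t}),
\]
where the last equality uses the expression for $\mathbb{F}^{(t)}$ supplied by Theorem \ref{thm:formula_density_chi}. There is no substantial obstacle here: the only point requiring a momentary comment is the absence of atoms at $U^{(t)}_o$, which is immediate from the smoothness of the chi density and of the softmax sampling probabilities in \eqref{eq:sel_prob_t} viewed as functions of $u$ through $X(u;\mathcal{A}^{(t)}_{\Sigma,o})$.
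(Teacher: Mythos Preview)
Your proposal is correct and follows exactly the same route as the paper's own proof, which simply says the claim follows directly from the definition of $\pvalue_{\Sigma,o}$ and the conditional distribution derived in Theorem \ref{thm:formula_density_chi}. Your added remark about the absence of atoms is a harmless elaboration of a point the paper leaves implicit.
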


\begin{proof}
 This claim follows directly from the definition of $\pvalue_{\Sigma,o}$ and the conditional distribution derived in Theorem \ref{thm:formula_density_chi}.
\end{proof}

As shown earlier in Section \ref{sec:pvals}, Theorem \ref{thm:cond_p_value_chi} establishes that the test based on this p-value controls the conditional Type I error at the nominal level $\alpha$.

\begin{theorem}\label{thm:cond_p_value_chi}
We have that
\begin{equation*}
    \bbP_{H_0^{(t)}}\left(\pvalue_\Sigma(U^{(t)};\mathcal{A}_{\Sigma}^{(t)})\leq \alpha \; | \;\mergecol{t} =\mergecolobs{t}\right) = \alpha, \quad \text{ for any} \quad 0\leq\alpha\leq 1.
\end{equation*}
\end{theorem}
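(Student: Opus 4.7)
The plan is to mirror the strategy used to prove Theorem \ref{thm:cond_p_value}, adapted to the chi-test setting. By Corollary \ref{cor:pval_chi}, we may write
$$\pvalue_\Sigma(U^{(t)};\mathcal{A}^{(t)}_\Sigma) = 1 - \mathbb{F}^{(t)}\!\left(U^{(t)};\mathcal{A}^{(t)}_\Sigma,\mergecolobs{t}\right),$$
and by Theorem \ref{thm:formula_density_chi} the function $\mathbb{F}^{(t)}(\cdot;\mathcal{A}^{(t)}_{\Sigma,o},\mergecolobs{t})$ is the CDF of the conditional distribution of $U^{(t)}$ given $\{\mathcal{A}^{(t)}_\Sigma = \mathcal{A}^{(t)}_{\Sigma,o}, \mergecol{t}=\mergecolobs{t}\}$ under $H_0^{(t)}$.

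The first step is to invoke the probability integral transform: since this conditional distribution admits a continuous density (the density in Theorem \ref{thm:formula_density_chi} is the product of the $\|\nu_{\mergeobs{t}}\|_2\cdot\chi_p$ density with the continuous sampling probabilities, which depend continuously on $u$ through the reconstruction $X(u;\mathcal{A}^{(t)}_{\Sigma,o})$ from Lemma \ref{lem:decompXSigma}), applying this CDF to $U^{(t)}$ yields a Uniform$(0,1)$ random variable conditional on $\{\mathcal{A}^{(t)}_\Sigma = \mathcal{A}^{(t)}_{\Sigma,o}, \mergecol{t}=\mergecolobs{t}\}$. Equivalently,
$$\bbP_{H_0^{(t)}}\!\left(\pvalue_\Sigma(U^{(t)};\mathcal{A}^{(t)}_\Sigma)\leq \alpha \; \big| \; \mathcal{A}^{(t)}_\Sigma=\mathcal{A}^{(t)}_{\Sigma,o},\,\mergecol{t}=\mergecolobs{t}\right) = \alpha.$$

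The second step is to remove the conditioning on the auxiliary statistics. Since the identity above holds for every realization $\mathcal{A}^{(t)}_{\Sigma,o}$, applying the tower property by integrating against the conditional law of $\mathcal{A}^{(t)}_\Sigma$ given $\{\mergecol{t}=\mergecolobs{t}\}$ preserves the constant $\alpha$, yielding
$$\bbP_{H_0^{(t)}}\!\left(\pvalue_\Sigma(U^{(t)};\mathcal{A}^{(t)}_\Sigma)\leq \alpha \; | \;\mergecol{t}=\mergecolobs{t}\right) = \alpha,$$
which is the claim.

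I do not anticipate a serious obstacle here, since the heavy lifting has already been absorbed into Theorem \ref{thm:formula_density_chi}: the p-value is constructed precisely as the survival function of the correctly corrected conditional distribution, so Type I error control follows from the probability integral transform essentially by design. The only point that merits care is ensuring the conditional distribution is continuous (to get exact uniformity rather than stochastic domination); this is immediate from the closed form of the conditional density in Theorem \ref{thm:formula_density_chi}, together with the fact that $\selprobtXuchi{s}{s}{u}$ is a continuous function of $u$ via Lemma \ref{lem:decompXSigma} and the softmax definition in \eqref{eq:sel_prob_t}.
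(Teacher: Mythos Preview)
Your proposal is correct and takes essentially the same approach as the paper: the paper explicitly states that the proof is identical to that of Theorem \ref{thm:cond_p_value} and omits it, and your two-step argument (probability integral transform conditional on both $\mathcal{A}^{(t)}_\Sigma$ and $\mergecol{t}$, followed by the tower property to average out the auxiliary statistics) exactly mirrors that proof. Your added remark about continuity of the conditional density is a reasonable justification for why the probability integral transform yields exact uniformity, though the paper does not make this explicit.
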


The proof of this theorem is identical to the proof of Theorem \ref{thm:cond_p_value} and is omitted from the paper.

\section{Additional simulation results}

\subsection{Clustering quality of randomized hierarchical clustering algorithm}\label{subsec:clustering_quality_addition}

\subsubsection{Definition of the clustering quality metrics}
\label{subsubsec:clustering_metrics}
Before presenting additional results on clustering quality, we first provide the definitions of the clustering quality metrics used in our study.

\begin{enumerate}
\item[(i)] The ratio between the within-cluster sum of squares (WCSS) and the total sum of squares (TSS), defined as 
\[
\frac{\text{WCSS}}{\text{TSS}} = \frac{\sum_{k=1}^K \sum_{i\in C_k}\|X_i - \Bar{X}_{C_k}\|^2_2}{\sum_{i=1}^n \|X_i - \Bar{X}\|^2_2},
\]
with $\Bar{X}_{C_k} = \dfrac{1}{|C_k|}\sum_{i\in C_k}X_i$ denoting the mean of cluster $k$, and $\Bar{X} = \dfrac{1}{n}\sum_{i=1}^n X_i$ representing the overall sample mean. This ratio measures the proportion of variability within clusters relative to the total variability, ranging from $0$ to $1$, with smaller values indicating more well-defined and compact clusters.

\item[(ii)] The Adjusted Rand Index (ARI), which measures the agreement between the estimated clustering $\widehat{C} = \{\widehat{C}_1, \dots, \widehat{C}_K\}$ and the true labels $C = \{C_1, \dots, C_{K'}\}$ of $n$ observations.
Let $n_{k,k'} = |\widehat{C}_k \cap C_{k'}|$ denote the number of points assigned to the estimated cluster $\hat{C}_k$ and the true cluster $C_{k'}$. 
The ARI is then defined as 
\[\text{ARI}= \dfrac{\sum_{k,k'}\binom{n_{k,k'}}{2} - \left[\sum_k \binom{a_k}{2}\sum_{k'}\binom{b_{k'}}{2}\right]\Big/\binom{n}{2}}{\dfrac{1}{2}\left[\sum_k \binom{a_k}{2}+ \sum_{k'}\binom{b_{k'}}{2}\right]-\left[\sum_k \binom{a_k}{2}\sum_{k'}\binom{b_{k'}}{2}\right]\Big/\binom{n}{2}},\]
where $a_k = \sum_{k'} n_{kk'}$ and $b_k' = \sum_k n_{kk'}$ denote, respectively, the total number of samples assigned to the estimated cluster $k$ and the total number of samples belonging to true class $k'$.
ARI ranges from $-1$ to $1$, with $1$ indicating perfect agreement between the estimated and true clustering, $0$ corresponding to random labelings and $-1$ indicating extreme disagreement that is worse than random labelings.
\end{enumerate}

\subsubsection{Additional experiments on evaluating clustering quality}\label{subsubsec:clustering_qual_add}
Continuing the simulations in Section~\ref{subsec:clustering_quality}, we now present additional results assessing the clustering performance of Algorithm~\ref{alg:RAC}. Specifically, in addition to comparing performance across different randomization level, we compare the performance of the randomized hierarchical clustering algorithm under varying levels of signal strength $\delta$, while fixing the randomization level at $\tau^* = 0.10$ (i.e., implementing RC(3)). 
Figure \ref{fig:clustering_quality_delta} shows that the randomized hierarchical clustering algorithm with $\tau^* = 0.10$ achieves performance very similar to the deterministic version across a wide range of signal strengths $\delta$. 
When the clusters are well separated (larger $\delta$), the randomized method attains nearly perfect recovery of the true clustering, as reflected by low WCSS/TSS ratios and high ARI values. 
Overall, these results suggest that a small amount of randomization does not degrade clustering quality, even as the signal strength varies. 
The benefits of the added randomization, however, become evident in the next sections, where we demonstrate the performance of our testing procedures for validating the clusters discovered by Algorithm \ref{alg:RAC}.

\begin{figure}[h]
        \centering
        \includegraphics[width = \linewidth]{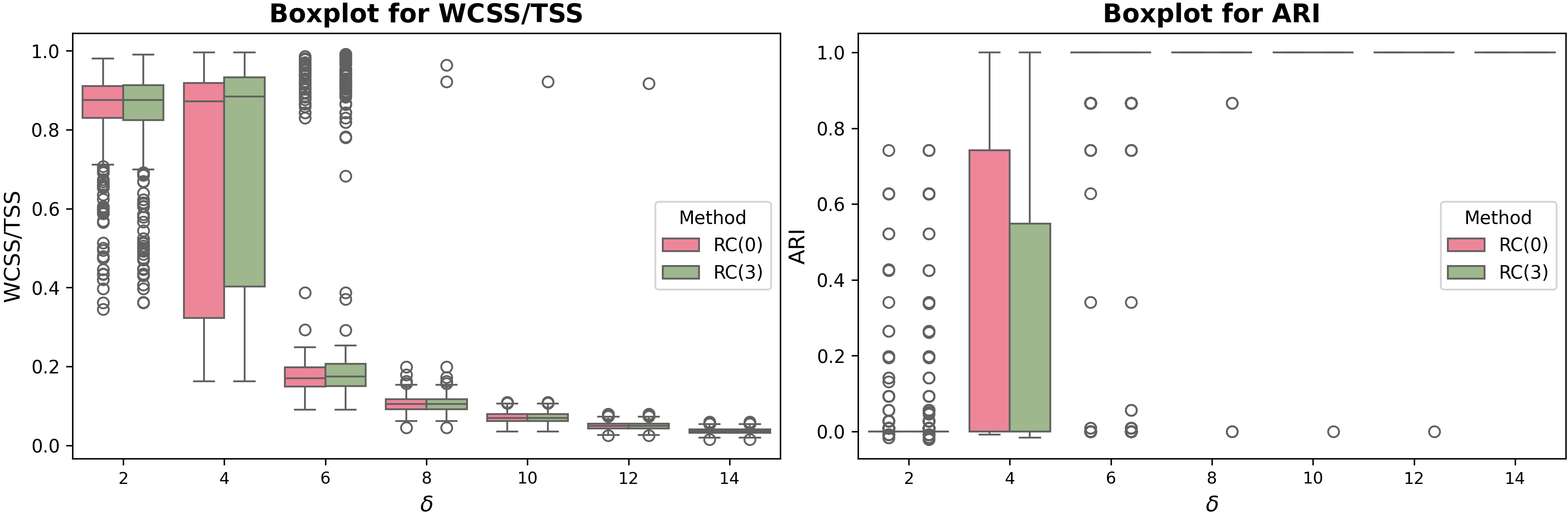}
    \caption{Comparison of clustering performance between deterministic and randomized hierarchical clustering across signal strengths $\delta$. (Left): Side-by-side boxplots of the WCSS/TSS showing how cluster compactness changes with $\delta$. (Right): Side-by-side boxplots of the Adjusted Rand Index (ARI), measuring agreement with the true clustering, which increases as $\delta$ increases.}
    \label{fig:clustering_quality_delta}
\end{figure}

\subsection{Additional experiments on evaluating p-value validity and power}\label{subsec:valdity_additional}

We begin this section by presenting additional experiment results for the experiments in Section \ref{sec:validitypower}. 

Following the setting described in Section \ref{sec:validitypower}, generate data under the null hypothesis and partition data into $K=3$ clusters. Figure \ref{fig:validity_random_k3} shows that the proposed method produces uniform p-values under the null hypothesis and successfully controls the Type I error rate. 

To evaluate power of the proposed p-value using $K=3$, we use the data generated from three equidistant clusters
\begin{equation}\label{eq:3clutsters}
    \mu_1 = \dots = \mu_{n/3} = \begin{bmatrix}
        0\\0
    \end{bmatrix}, \mu_{n/3+1}=\dots= \mu_{2n/3} = \begin{bmatrix}
        \delta\\0
    \end{bmatrix}, \mu_{2n/3+1}=\dots= \mu_{n} = \begin{bmatrix}
        \delta/2\\\sqrt{3}\delta/2
    \end{bmatrix}.
\end{equation}
Figure \ref{fig:power_k3} illustrates that the proposed method consistently achieves greater power compared to existing selective inference procedures.

\begin{figure}[h]
  \centering
  \includegraphics[width = \textwidth]{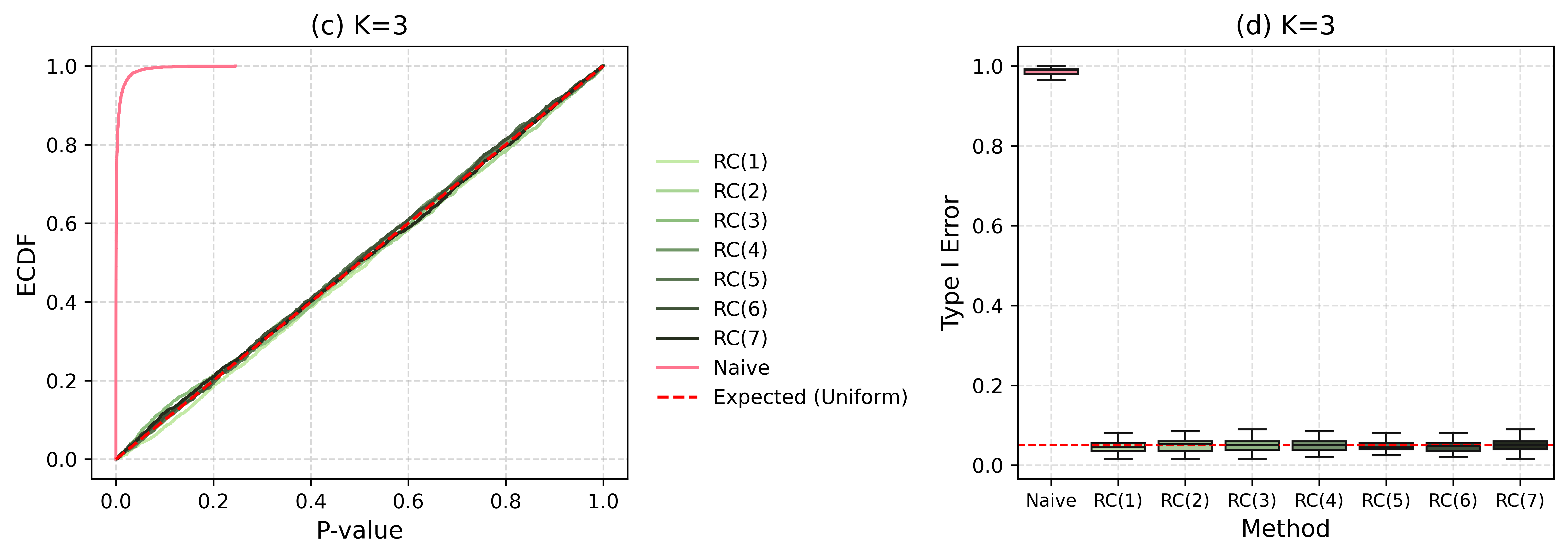}

  \caption{Comparison of the p-value ECDFs and Type I error rates simulated under the null hypothesis. (Left) ECDF plots for the proposed and baseline methods. (Right) Boxplots of the Type I error rates across different methods.}
  \label{fig:validity_random_k3}
\end{figure}

\begin{figure}[h]
    \centering
    \includegraphics[width=\linewidth]{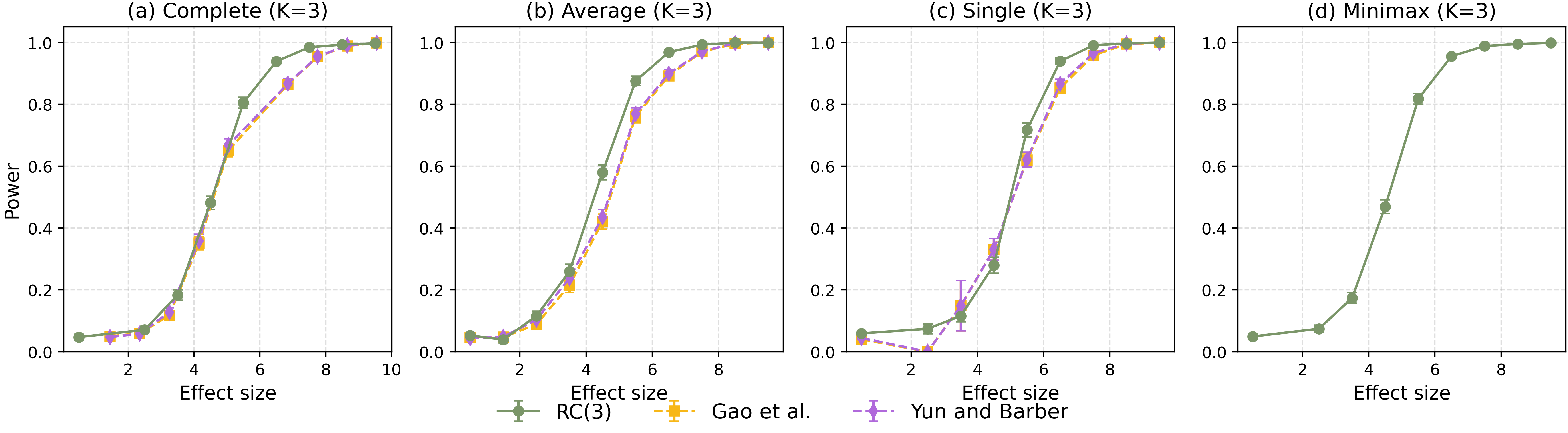}
    \caption{Empirical power curves as a function of effect size for the proposed randomized method with $\tau^* = 0.10$, compared with the two selective inference approaches under varying choice of linkage functions (complete, single, average and minimax) and true number of clusters $K=3$.}
    \label{fig:power_k3}
\end{figure}

Next, we present the comparison of the proposed p-value validity with existing selective inference methods. Using the same data-generating model as in Section \ref{sec:validitypower} with $n=30$, $p=10$, $\sigma =1$, we perform our clustering algorithm until the data are partitioned into $K=2$ and $K=3$ clusters, respectively,
under four linkage criteria: complete, average, single, and minimax. 
For each setting, we perform $2000$ repetitions and evaluate the ECDF of the resulting p-values to assess validity under the null.
The software implementations of the methods proposed by \citep{gao2024selective} and \citep{yun2023selective} do not support testing clusters obtained using the minimax linkage criterion.
Figure \ref{fig:ecdf_linkages} shows that, across all settings, the ECDFs of the p-values from RC(3) align closely with the reference uniform distribution, indicating validity.
As expected, the results are similar to the two existing selective inferential methods whenever these approaches are available for a given linkage criterion.

\begin{figure}[h]
    \centering
    \includegraphics[width=\linewidth]{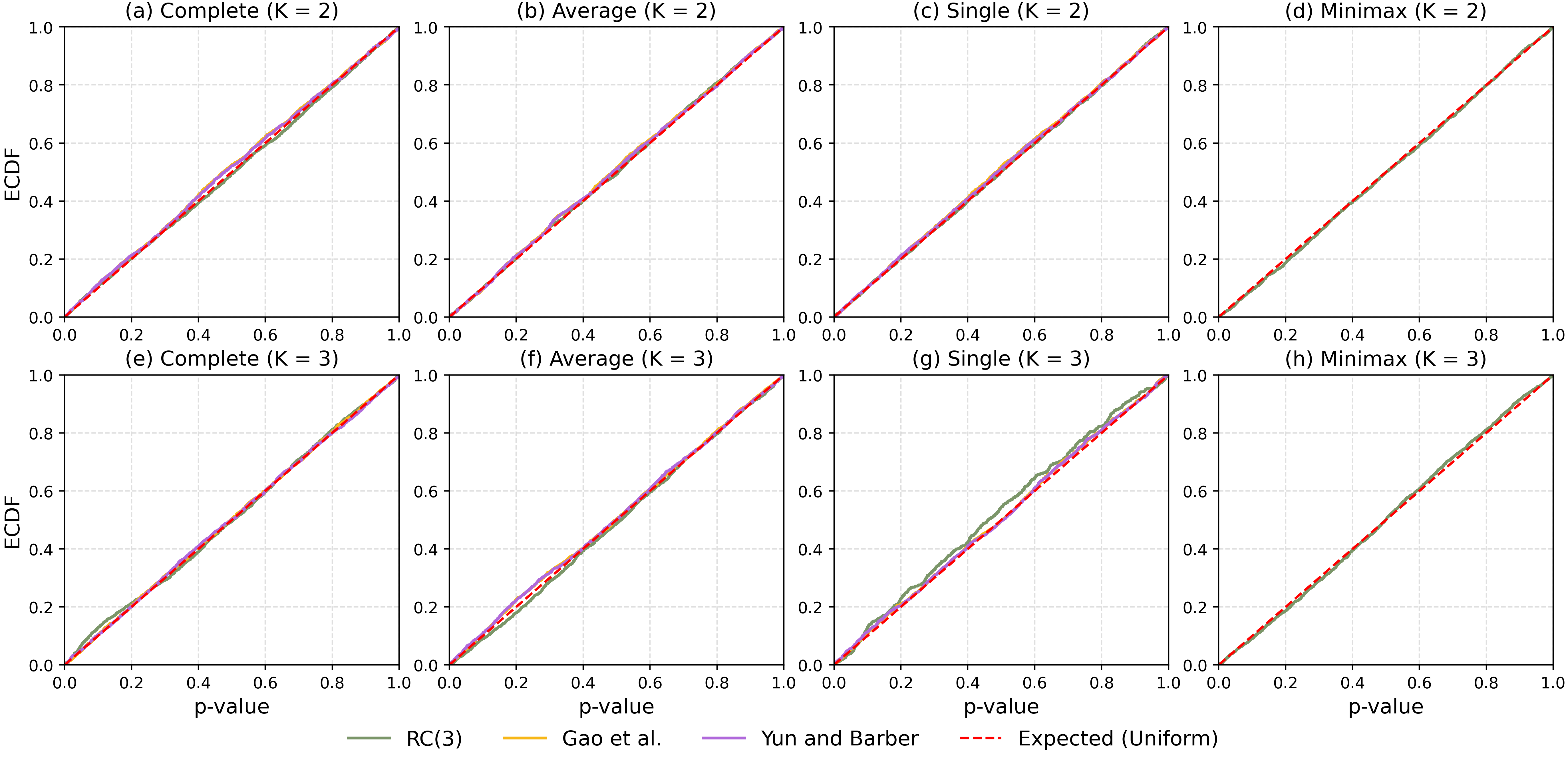}
    \caption{Comparison of the ECDFs of p-values obtained under the null hypothesis. The proposed method with $\tau^* = 0.10$ is compared against \citep{gao2024selective} and \citep{yun2023selective}, across different linkage criteria and cluster numbers $K\in\{2,3\}$.}
    \label{fig:ecdf_linkages}
\end{figure}

\subsection{Recovery of the True Number of Clusters Under Varying $K^*$}\label{subsec:varyK_addtional}
We further evaluate the performance of Algorithm \ref{alg:alpha_spending_stop} under varying true numbers of clusters $K^* \in \{1,2,\dots,10\}$. 
For each value of $K^*$, we generate $100$ independent datasets with $n = 200$ observations and $p = 2$ dimensions. 
The $K^*$ cluster centers are placed evenly around a circle of radius $\delta = 6$, and each data point is sampled from a spherical Gaussian distribution centered at its corresponding cluster. 
Cluster sizes are distributed as evenly as possible across groups. 
We use the same procedure as in the experiments in Section~\ref{sec:chooseK_experiments} to generate the $\alpha$-sequence, and set $n_{\min} = 10$ and $n^* = 40$.

\begin{figure}[h]
    \centering
    \includegraphics[width=\linewidth]{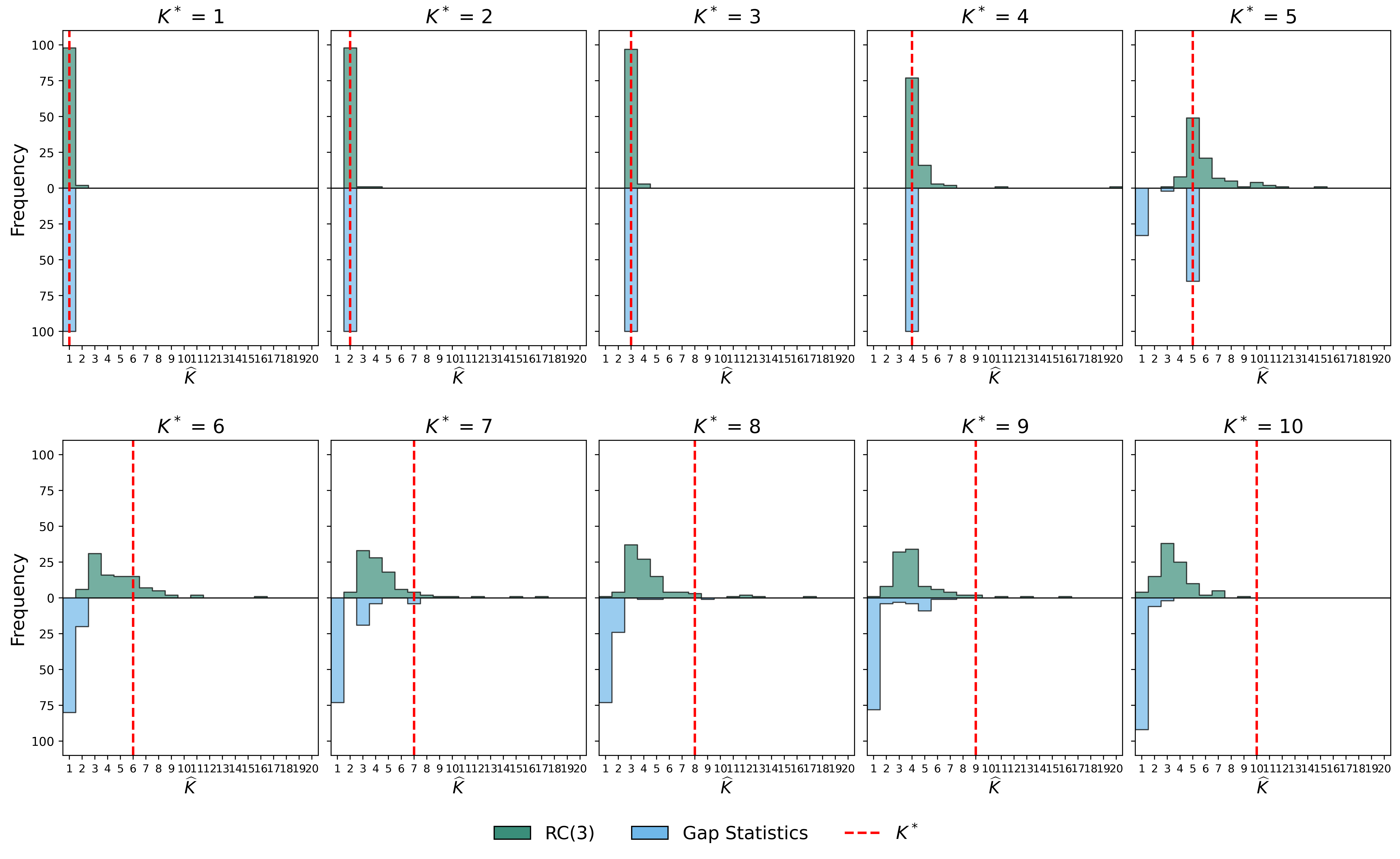}
    \caption{Paired histograms of the estimated number of clusters $\widehat{K}$ under the proposed method and the gap statistics for varying true number of clusters $K^*$.}
    \label{fig:hist_varyingK}
\end{figure}

Figure \ref{fig:hist_varyingK} shows that both the proposed method and the gap statistic recover the true number of clusters with high stability when $K^*$ is small. 
However, as $K^*$ increases, the estimation problem becomes more challenging, and both methods tend to underestimate the true number of clusters. 
Despite this tendency, the proposed procedure often selects more than $\widehat{K} = 1$ cluster, whereas the gap statistic frequently collapses to a single cluster solution. 
In other words, the proposed method reveals substantially more clustering structure in the data than the gap statistic.

\end{document}